%% file: manuscript.tex
\documentclass[12pt,english]{article}
\input{preamble}

\theoremstyle{remark}
\newtheorem{remark}{Remark}
\newtheorem{assumption}{Assumption}
\theoremstyle{definition}
\newtheorem{definition}{Definition}
\theoremstyle{lemma}
\newtheorem{lemma}{Lemma}

\usepackage{booktabs}
\usepackage{amsfonts}

\title{Fostering Data Collaboration in Digital Transportation Marketplaces: The Role of Privacy-Preserving Mechanisms}

\def\shortauthors{Wang et al.}
\def\runningtitle{Fostering Data Collaboration in Digital Transportation Marketplaces}

\author[a]{Qiqing Wang}
\author[b]{Haokun Yu}
\author[a$\ast$]{Kaidi Yang}
\affil[a]{Department of Civil and Environmental Engineering, National University of Singapore, 1 Engineering Drive 2, Singapore 117576, Singapore}
\affil[b]{Institute of Operations Research and Analytics, National University of Singapore, Innovation 4.0 Research Link 3, Singapore 117602, Singapore}

\def\corresemail{kaidi.yang@nus.edu.sg}

\def\abstract{ 
Data collaboration between municipal authorities (MA) and mobility providers (MPs) has brought tremendous benefits to transportation systems in the era of big data. 
Engaging in collaboration can improve the service operations (e.g., reduced delay) of these data owners, however, it can also raise privacy concerns and discourage data-sharing willingness. 
Specifically, data owners may be concerned that the shared data may leak sensitive information about their customers' mobility patterns or business secrets, resulting in the failure of collaboration.
This paper investigates how privacy-preserving mechanisms can foster data collaboration in such settings. We propose a game-theoretic framework to investigate data-sharing among transportation stakeholders, especially considering perturbation-based privacy-preserving mechanisms. 
Numerical studies demonstrate that lower data quality expectations can incentivize voluntary data sharing, improving transport-related welfare for both MAs and MPs.
Our findings provide actionable insights for policymakers and system designers on how privacy-preserving technologies can help bridge data silos and promote collaborative, privacy-aware transportation systems.
}

\def\keyword{Digital transportation marketplace \\ Data collaboration game\\Data privacy \\ Privacy-preserving mechanisms \\ Traffic management and control}

\begin{document}
\maketitle
\titlepageext

\section{Introduction}
Big data has been a powerful catalyst for the continuous growth in the transportation sector. Specifically, the rapid advancement of sensing and vehicular technologies (e.g., roadside sensors, connected vehicles, etc.) and the widespread adoption of mobility services have driven an explosive surge in data availability. This surge in data enables stakeholders, including Municipal Authorities (MAs) and Mobility Providers (MPs) such as ride-hailing companies, to devise effective data-driven analytic and decision-making strategies, including traffic state estimation/prediction~\citep{xu2020ge}, traffic control ~\citep{guo2019urban},  routing ~\citep{huang2020multi}, and fleet management~\citep{GUO2023104244}. These emerging smart solutions have significantly enhanced safety, mobility, and sustainability in transportation systems, thereby fostering the expansion of digital transportation marketplaces. It is projected that the global market size of big data analytics in transportation will grow at a compound annual growth rate of 23.8\% from 2025 to 2030, eventually reaching an expected value of 43 billion USD by 2030~\citep{marketsize}.

The key to unlocking the benefits of big data in transportation marketplaces lies in \emph{data collaboration}. 
This is because transportation data is often fragmented across multiple stakeholders, leading to isolated data silos. 
Specifically, each stakeholder typically possesses data that captures only a limited segment of transportation systems, due to the significant variations in the types of services (e.g., traffic control for MAs and ridesharing services for MPs), the temporal and spatial scale of service coverage (e.g., holidays vs. workdays, public transit networks vs. road networks), and the sensing technologies employed (e.g., loop detectors, mobile sensors, etc.). 
Therefore, the data possessed by various stakeholders are often highly complementary, offering significant mutual value due to the intensive interactions between these stakeholders within the marketplace. This lays a strong foundation for data collaboration as a win-win strategy, with a significant potential to improve social welfare (for MAs) and service quality (for MPs) \citep{smichowski2018determinants}. 
The benefits of data collaboration have been extensively demonstrated in existing literature. For example, studies show that MA can significantly enhance traffic state estimation algorithms by integrating operational data from multiple MPs, thereby leading to more efficient traffic management that can improve social welfare~\citep{zheng2018traffic, WANG2024104743, wanga2024collaborating}. Additionally, collaboration among different MPs has been instrumental in developing popular Mobility-as-a-Service (MaaS) solutions, whereby MPs can coordinate each other's service coverage, thereby attracting more travelers to their platforms and expand their market reach~\citep{pantelidis2024mobility, smichowski2018determinants}. 

However, data collaboration is nontrivial, as data sharing can raise \emph{privacy concerns} for stakeholders who share the data. Such privacy concerns are two-fold. First, the shared data (e.g., trajectory data, survey data, etc.)  can contain sensitive information about individual mobility patterns, which can be used to infer personal profiles or socioeconomic status if accessed by any third parties or adversaries. Personal information has been generally protected by increasingly stringent regulations, e.g., the General Data Protection Regulation (GDPR) of the European Union~\citep{gdpr}, the violations of which can lead to high fines~\citep{uber, Telecompaper, google}. Second, apart from individual-level information, the shared data can also be collectively used to infer operational secrets of the stakeholders, e.g., sensitive algorithmic parameters~\citep{he2020optimal, StravaPrivacy}. The leakage of these parameters can undermine the stakeholders' competitive advantages or even operational safety~\citep{yu2025interaction}. These two types of privacy concerns pose a significant challenge to fostering collaboration in practice. For example, private bus operators refused to share operational data with governments in Hong Kong~\citep{newstoday}, and car manufacturers declined to share electric vehicle testing data with industry peers in Europe~\citep{martens2021data}. 

To address privacy concerns, several advanced privacy-preserving mechanisms have been proposed in recent years to facilitate data collaboration while minimizing the disclosure of sensitive information~\citep{jin2022survey, gati2021differentially, ying2022privacysignal, gao2022privacy, zhang2022privacy, tsao2022trust, tsao2023differentially, TAN2024104453, ZHOU2024104885, WANG2024104743}. 
These mechanisms can be broadly categorized into encryption-based and perturbation-based techniques. Encryption-based techniques, such as secure multi-party computation, enable stakeholders to compute functions collaboratively without explicitly sharing their data~\citep{tsao2022trust, TAN2024104453, ying2022privacysignal}. While 
these techniques can yield high data accuracy, they are often computationally intensive to implement, which may make it challenging for transportation stakeholders to share a large amount of data efficiently. 
In contrast, perturbation-based techniques, such as differential privacy (DP) and information-theoretic approaches, allow data owners to produce perturbed versions of the raw data, reducing privacy risks while preserving its utility for data requesters~\citep{chow2018informed, tsao2023differentially, zhang2022privacy, ZHOU2024104885}.
In other words, perturbation-based techniques offer the potential to strike a ``sweet spot'' between sharing raw data and no sharing, optimizing the trade-off between the benefits of data collaboration and the associated privacy concerns. 
Thanks to their interpretability and computational efficiency, perturbation-based techniques have been widely used in practice, e.g., the US Census Bureau using DP to release the 2020 census data~\citep{USCensus}, protecting individual privacy while enabling statistical analysis.  

Despite the potential of privacy-preserving mechanisms, few works investigate their impact on the data-sharing strategies of transportation stakeholders (i.e., MAs and MPs) in data collaboration. Most existing research focuses on individuals, employing theoretical analyses \citep{fainmesser2023digital} and surveys \citep{berke2023drone, cottrill2015location}, which suggest that privacy-preserving mechanisms can enhance individuals' willingness to share data. However, the insights derived from individual-level studies cannot be directly generalized to transportation stakeholders. 
This is due to the strong interactions among stakeholders, which arise from the overlapping and interdependent nature of their operations.
For example, one MP's sharing data with the MA to improve traffic management can result in system-level improvements (e.g., reduced congestion), which can benefit all MPs operating within the same traffic network.
These strong interactions mean that the data shared by one stakeholder can influence the decision-making of others and, in turn, affect their own operations. 
Very few recent works have considered such dynamics~\citep{BUTLER2021103036,liu2022efficient}, e.g., co-optition whereby the data shared for cooperative purposes can simultaneously be used in competitive contexts, especially when stakeholders operate in overlapping services or markets~\citep{smichowski2018determinants}. 
While several studies employ game-theoretical analyses to investigate data collaboration among transportation stakeholders~\citep{liu2022efficient,chen2023r}, 
these works assume that stakeholders only decide whether and how much data to share, without accounting for the potential perturbations introduced by privacy-preserving mechanisms, as illustrated in Figure~\ref{fig-intro}
The incorporation of perturbations inherently degrades the quality of shared data, giving rise to two fundamental challenges. First, degraded data quality may lead to mismatches between data requesters’ requirements and the quality of the shared data, thereby complicating the underlying game structure. Second, the introduction of random perturbations increases the difficulty of accurately estimating the utilities of both the MA and MPs.
To the best of our knowledge, no existing research has modeled the data collaboration while incorporating privacy-preserving mechanisms, leaving a critical gap in understanding their role in fostering data collaboration. 

\begin{figure}[ht]
    \centering
    \includegraphics[width=1\textwidth]{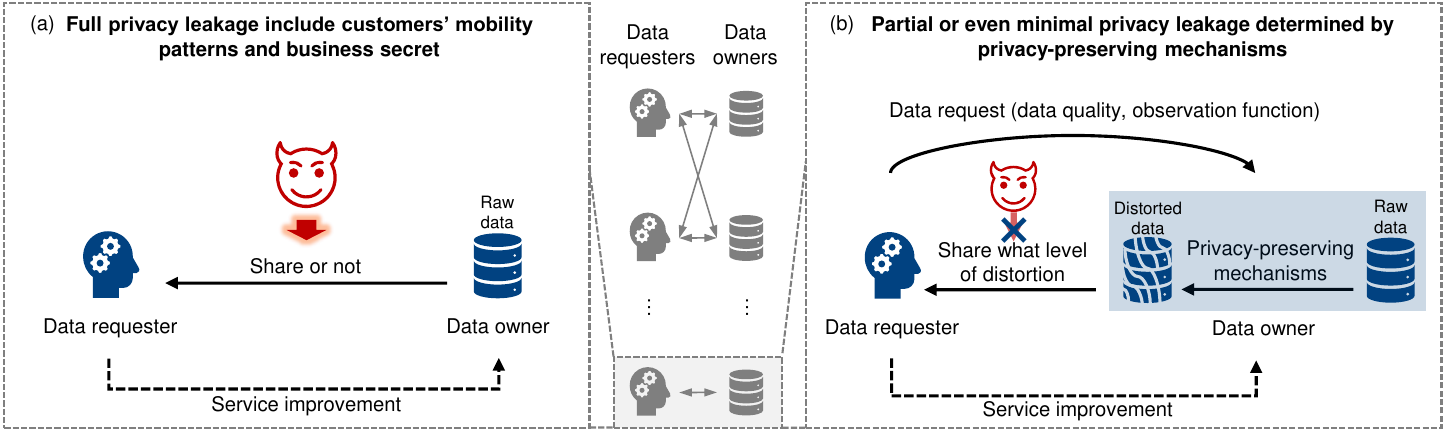}
    \caption{Illustration of data collaboration without (a) and with (b) privacy-preserving mechanisms in transportation systems. Without privacy-preserving mechanisms (a), data sharing is limited to a binary choice and may result in full privacy leakage. With privacy-preserving mechanisms (b), data owners determine the level of data distortion, balancing privacy leakage and data utility.}
    \label{fig-intro}
\end{figure}

\textbf{Statement of Contribution}. To address this critical research gap, we initiate the research to investigate the role of privacy-preserving mechanisms in fostering data collaboration within business-to-government (B2G) digital transportation marketplaces. Our contributions are four-fold. First, we propose a Stackelberg game-theoretic framework to investigate B2G data collaboration among transportation stakeholders, such as MA and MPs, incorporating perturbation-based privacy-preserving mechanisms.
Second, we theoretically analyze the properties of the proposed Stackelberg game (e.g., the existence of equilibrium, conditions for successful data collaboration) under mild practical assumptions.
Third, we instantiate the Stackelberg game-theoretic framework in a practical transportation data collaboration scenario, i.e., collaborative traffic signal optimization, by designing a differentially private traffic demand estimation mechanism that integrates DP with LWR theory.
Fourth, we conduct numerical studies inspired by real-world data to validate the analytical results in typical data-sharing scenarios, providing managerial insights and policy guidance on how and when privacy-preserving mechanisms can promote data collaboration in digital transportation markets. Overall, our work serves as a pioneer in providing a methodological framework and managerial insights to advance the understanding and practice of privacy-preserving data collaboration in digital transportation marketplaces.

This paper is organized as follows. 
Section~\ref{sec: Problem statement} formalizes the data-sharing topologies and introduces the data collaboration problem in a B2G digital marketplace. 
Section~\ref{sec: The model} presents the general model for the data collaboration game and studies the equilibrium through theoretical analysis. 
Section~\ref{sec: Model specification} specifies the model for a transportation-specific setting, collaborative traffic signal optimization.
Section~\ref{sec: Numerical experiments} conducts numerical experiments and provides actionable insights in promoting data collaboration.
Section~\ref{sec: conclusion} concludes the paper.

\section{Problem Statement}\label{sec: Problem statement}
Let us consider a set $\mathcal{K}=\{1,2, \ldots, K\}$ of $K$ stakeholders (each being an MA or MP in our context) who intend to join a data collaboration game in a digital transportation marketplace. 
We use a directed graph $\mathcal{G} = (\mathcal{K}, \mathcal{E})$ to represent the \emph{data-sharing topology} of these stakeholders, where edge $(k,l) \in \mathcal{E}$ represents that stakeholder $k$ \emph{requests} data from stakeholder $l$. We use set $\mathcal{H}_k^+=\{l|(l,k)\in\mathcal{E}\}$ to represent the set of stakeholders that stakeholder $k$ is considering sharing data to and set $\mathcal{H}_k^-=\{l|(k,l)\in\mathcal{E}\}$ to represent the set of stakeholders that stakeholder $k$ is considering requesting data from. 
Based on the data-sharing topology, the role of each stakeholder in data collaboration can be a \emph{data requester} who needs data from other stakeholders and/or a \emph{data owner} who considers sharing data with other stakeholders.
We assume that each stakeholder has at least one of these roles, as otherwise there is no point for them to participate in the data collaboration. 
Mathematically, the set of data requesters can be represented by $\mathcal{K}_R = \{k|\mathcal{H}_k^-\neq\emptyset\}$, and the set of data owners can be represented by $\mathcal{K}_O = \{k|\mathcal{H}_k^+\neq\emptyset\}$. 
We further assume that data sharing is non-transferable, meaning that data received by a requester cannot be further shared with others. This restriction is typically enforced through contractual agreements that explicitly prohibit the transfer of data to third parties. The important variables used in this paper are presented in Table \ref{tab: glossary of terms} of Appendix \ref{sec: variables}.
\begin{remark}[Examples of data-sharing topology]
In real-world transportation data collaboration games with $K$ stakeholders, there are two typical examples of data-sharing topology: 
\begin{enumerate}
    \item[(1)] Star-like topology (see Figure \ref{fig-topology} (a)). In this scenario, an MA requests data from a set of MPs. Mathematically, there exists only one data requester $k$ (i.e., the MA), where $\mathcal{H}_k^- = \mathcal{K}\backslash \{k\}$, and the others (i.e., MPs) are all data owners. For each data owner $l \in\mathcal{K}\backslash \{k\}$, $\mathcal{H}_l^+ = \{k\}$. 
    
    \item[(2)] Fully-connected topology (see Figure \ref{fig-topology} (b)). In this scenario, a set of MPs (e.g., transit operators, ride-hailing companies, etc.) perform data collaboration to share data with each other. Mathematically, each stakeholder is both a data requester and a data owner. For each stakeholder $k$, $\mathcal{H}_k^+ = \mathcal{K}\backslash \{k\}$ and $\mathcal{H}_k^- = \mathcal{K}\backslash \{k\}$.
\end{enumerate}
\end{remark}
\begin{figure}[ht]
    \centering
    \includegraphics[width=0.65\textwidth]{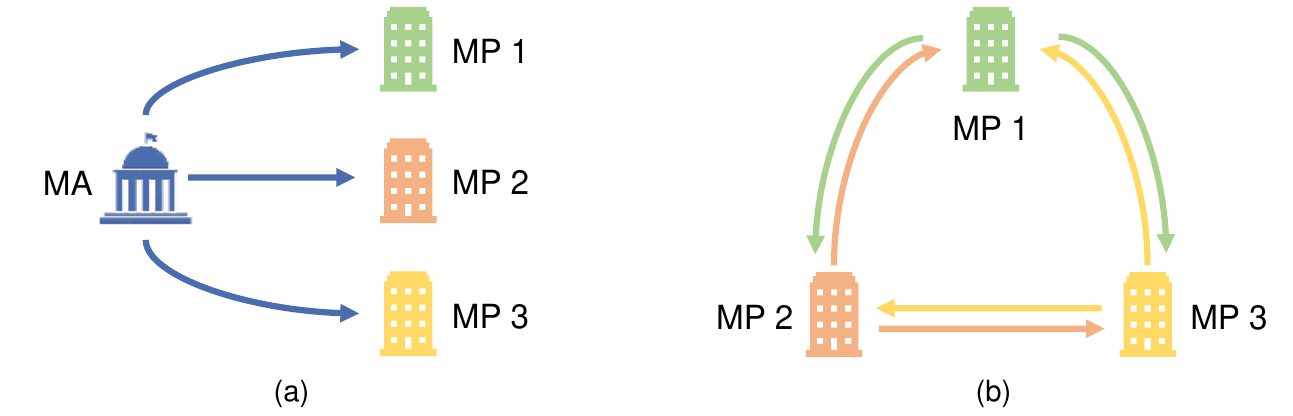}
    \caption{Examples of data-sharing topology. (a) Star-like topology, (b) Fully-connected topology.}
    \label{fig-topology}
\end{figure}

\begin{table}[htbp]
    \centering
        \caption{Examples of star-like data collaboration in digital transportation marketplaces.}
    \begin{tabular}{p{8cm}|p{8cm}}
    \hline\hline
    Types & Examples in research and practice \\\hline
    MA performs efficient traffic management and control with MPs’ traffic observations  & Traffic signal control~\citep{wang2024traffic,wanga2024collaborating}, Traffic state estimation~\citep{WANG2024104743}, 
 \\\hline  
    MA promotes convenient multimodal transport service with MPs’ fleet operations & Mobility as a Service (MaaS) platform~\citep{cottrill2020maas}
 \\\hline  
    MA facilitates innovative deep-tech mobility R\&D with MPs' product validations & Electric vehicle development~\citep{martens2021data}, Autonomous vehicle development~\citep{chen2023r}
    \\\hline\hline
    \end{tabular}
    \label{tab:star-like examples}
\end{table}

In this work, we primarily focus on investigating the star-like data-sharing topology, in which a single MA requests mobility data from multiple MPs. This setting aligns with the structure of a B2G digital marketplace, where private entities supply data to public authorities to support governance, policy implementation, and infrastructure optimization~\citep{chen2023r, wang2024traffic, martens2021data}. Such a topology is particularly relevant in transportation systems and is widely observed in both research and practice, as illustrated by the examples summarized in Table~\ref{tab:star-like examples}.
Specifically, the relevance of the star-like setting is two-fold.
First, many traffic control and management systems are established and operated by MAs, inherently necessitating access to MPs' data to enhance operational efficiency (e.g., MA performs traffic signal control with MPs' vehicle trajectory data \citep{wang2024traffic}).
Second, in many economies, MAs often have priority in formulating and enforcing regulatory policies, giving them an inherent advantage in leading data collaborations across MPs (e.g., MA aggregates MPs' real-time navigation data to support scientific research and facilitate R\&D activities \citep{martens2021data}).
It should be noted that the participation of MPs in the B2G digital marketplace is voluntary and driven by the business interests of MPs. In fact, data sharing can bring tangible benefits to MPs, e.g., service improvement through reduced delay \citep{wang2024traffic} and policy support \citep{martens2021data}.
Hence, in privacy-critical scenarios, MPs can still be incentivized to weigh the potential utility gains from collaboration against their privacy concerns, thereby making strategic decisions regarding data collaboration.

To this end, we consider a star-like data collaboration game that consists of one MA as the data requester, denoted by $\mathcal{K}_R = \{1\}$, and $K-1$ MPs as data owners, denoted by $\mathcal{K}_O = \{2, 3, \cdots, K\}$.

\section{General Model and Analysis}\label{sec: The model}
This section presents a general model and its equilibrium analysis for the data collaboration game between a single MA and multiple MPs in a B2G marketplace, which is typical in transportation systems. 

\subsection{Modeling of the data collaboration game}
We introduce the strategy formulations (see Sections \ref{sssec-str-ma} and \ref{sssec-str-mp}) and utility functions (see Section \ref{sssec-utl}) of the data requester and data owners. Specifically, we model our problem as a Stackelberg game, where a data requester (the leader) selects a strategy to allocate its data requests/expectations, to which the data owners (the followers) respond by deciding whether to engage in data collaboration with the adoption of a privacy-preserving mechanism. 

\subsubsection{Strategy of data requester}\label{sssec-str-ma}
Data requester MA determines and communicates the specifications of the \textit{data request}, a pure strategy, to each relevant data owner $k\in\mathcal{K}_O$. Here, we allow the data requester to customize its request for each data owner, as the data possessed by different data owners may be heterogeneous (e.g., trajectory data under different vehicle penetration rates). The format of a data request is given in Definition~\ref{dfn: request}.

\begin{definition}[Data request] \label{dfn: request}
    A data request sent by data requester MA is structured as a collection $R_1=\{(F_{\sigma_{k}}, \Phi_{k}, d_{k})\}_{k\in \mathcal{K}_O}$, where each tuple $(F_{\sigma_{k}}, \Phi_{k}, d_{k})$ specifies the request sent to data onwer $k\in\mathcal{K}_O$. Here, the function $F_{\sigma_{k}}(\cdot)$ represents an observation function parameterized by $\sigma_{k}$ describing the mapping from the system states $\bm{x}_k$ of stakeholder $k$ to required observation $\bm{y}_{k}$ (i.e., $\bm{y}_{k}=F_{\sigma_{k}}(\bm{x}_k)$).
    The function $\Phi_{k}(\cdot)$ and the scalar $d_{k}$ are used to construct a data quality requirement on the shared data from data owner $k$, i.e., $\Phi_{k}(\bm{y}_{k},\hat{\bm{y}}_{k}) \leq d_{k}$, where $\Phi_{k}(\bm{y}_{k},\hat{\bm{y}}_{k})$ represents the distance between shared data $\hat{\bm{y}}_k$ and true data $\bm{y}_k$  (e.g., $l_2$-norm), and $d_{k}$ represents a threshold. In a data request, the format of the function $\Phi_{k}$ and function $F_{\sigma_{k}}$ as well as parameter $\sigma_{k}$ are pre-determined and known by both MA and MP $k$, and only parameter $\bm{d}=\{d_{k}\}_{k\in \mathcal{K}_O}$ is to be optimized by the data requester.
\end{definition}

We make the following remarks on the data requests. First, the observation function $F_{\sigma_{k}}(\cdot)$ enables data owners to share data with a predefined transformation, which is widely applicable in practice. For example, the observation function may involve data sampling or data aggregation in time and space, both of which are commonly employed to reduce the resolution of shared data. In this context, the aggregation level and sampling frequency can be part of the parameters $\sigma_{k}$. Second, the decision on data quality stems from the possibility of data owners adopting perturbation-based privacy-preserving mechanisms that introduce distortions to the data. The data quality threshold $d_{k}$ represents the data requester's tolerance for the accuracy of the data provided by data owner $k$, as overly inaccurate data may render it unusable for the requester's purposes. Overall, the combination of the observation function and data quality captures the expectations of data requesters for the data.

\subsubsection{Strategies of data owners}\label{sssec-str-mp}
Perturbation-based privacy-preserving mechanisms fundamentally transform data sharing by providing a controllable trade-off between data utility and data privacy. Such a mechanism is defined in Definition~\ref{dfn: pb-ppm}.

\begin{definition}[Perturbation-based privacy-preserving mechanism] \label{dfn: pb-ppm}
    A perturbation-based privacy-preserving mechanism is defined as a randomized function $G_{\epsilon_{k}}(\cdot)$, parameterized by privacy budget $\epsilon_{k}$, that maps true data to its distorted counterpart. Such a function can be realized through noise injection (e.g., differential privacy), data compression (e.g., federated learning), or other perturbation operations. The privacy budget $\epsilon_{k}$ controls the strength of perturbation applied to the true data (i.e., smaller values of $\epsilon_{k}$ correspond to stronger privacy protection and hence stronger perturbation), thereby governing the trade-off between data utility and data privacy.
\end{definition}

Upon receiving the data requests, each data owner $k\in\mathcal{K}_O$ decides the \textit{data-sharing policy} with a privacy-preserving mechanism, also a pure strategy, in response to data request $R_{1}$ from data requester MA. 
The specifications of a data-sharing policy are given in Definition~\ref{dfn: share}.
 
\begin{definition}[Data-sharing policy] \label{dfn: share}
    A data-sharing policy chosen by data owner $k\in \mathcal{K}_O$ to data requester $k\in \mathcal{K}_R$ is structured as a tuple 
    $S_{k}=(a_{k}, G_{\epsilon_{k}})$. Here, $a_{k}\in\{0,1\}$ is a binary variable indicating whether data is shared with the requester MA. If $a_{k}=1$, i.e., data is shared, the function $G_{\epsilon_{k}}(\cdot)$ represents a randomized \emph{perturbation-based privacy-preserving mechanism}. The format of $G_{\epsilon_{k}}$ is predetermined and assumed to be public knowledge, and parameters $(a_{k}, \epsilon_{k})$ are to be optimized by the data owner.
\end{definition}

We make the following remarks on the data-sharing policy. First, we assume that the privacy-preserving mechanism is devised and invested in by data owners for three reasons: i) only data owners have access to these raw data, ii) only data owners know what kind of sensitive information they want to protect, and iii) they seek to enhance market profits with data sharing. Second, we assume that the data owners will strictly follow the requirements as specified in the data request, which can be verified by data requesters using encryption-based methods such as zero-knowledge proof \citep{tsao2022trust}.

\subsubsection{Data player utilities}\label{sssec-utl}
The decision-making processes of all data players in the data collaboration can be formulated into a Stackelberg game. Specifically, based on the data request specifications broadcast by the data requester, the data owners decide their data-sharing policy and compete in a non-cooperative manner. The utility functions for each player are as follows, where we use $k$ and $-k$ to represent the ego data owner and other data owners, respectively.

\noindent \textbf{Follower utility maximization}. After the data requester sends the data request, each data owner $k \in \mathcal{K}_O$ aims to determine its actions $a_k$ and $\epsilon_k$ to maximize a utility function that captures the trade-off between transport-related welfare and privacy cost, which can be written as 
\begin{align}
    \max_{a_k, \epsilon_k}\quad & U_k\left(a_k, \epsilon_k; a_{-k}, \epsilon_{-k}, \bm{d}\right)=W_k\left(\left\{G_{\epsilon_{k'}}(\bm{y}_{k'})\right\}_{k':a_{k'}=1}\right) -  
    \beta_k \epsilon_ka_k,\label{eq: general-UO-obj}\\
    \mathrm{s.t.}\quad& \Phi_k(\epsilon_k) = \bar{\Phi}_k\left(\bm{y}_{k},G_{\epsilon_{k}}(\bm{y}_{k})\right) \leq d_{k}+Z(1-a_{k}),
    \label{eq: general-UO-cons1}\\ 
    &a_k \in \{0,1\},\quad \epsilon_k\in (0,1),\label{eq: general-UO-cons2} 
\end{align}
where the decision variables include a binary variable $a_k$ indicating whether MP $k$ participates in data collaboration and shares data, as well as the privacy budget $\epsilon_k$. 
The objective function \eqref{eq: general-UO-obj} includes (i) the expected transport-related welfare $W_k\left(\cdot\right)$ determined by the MA that depends on all data $\left\{G_{\epsilon_{k'}}(\bm{y}_{k'})\right\}_{k':a_{k'}=1}$ exchanged among the stakeholders, i.e., both participation and (ii) the privacy leakage penalty measured by the parameter $\epsilon_k$ for the privacy-preserving mechanism.
The second term is weighted by $\beta_k$ to reflect the stakeholder's preference on the trade-off between privacy and utility.
Without loss of generality, we assume that $W_k(\emptyset) = 0$, as no data can be used for improving the welfare.
Constraint \eqref{eq: general-UO-cons1} represents that the expected distance, calculated by $\bar{\Phi}_k(\cdot)$, between distorted data $G_{\epsilon_{k}}(\bm{y}_{k})$ and real data $\bm{y}_{k}$ should be no greater than the data quality threshold $d_{k}$ required by data requester $k$, where $Z$ is a sufficiently large number to ensure that this constraint is only active for data requester $k$ with $a_{k}=1$ (i.e., data is shared).

We make the following regularity assumptions regarding the utility functions to characterize the follower’s behavior.
 
\begin{assumption}[Regularity conditions for utility functions] \label{asm:regularity}
We consider regular utility functions that satisfy the following conditions:

\noindent (1) If any data owner $j$ does not participate in the data collaboration, i.e., $a_j=0$, then its privacy budget should not influence any utility function. In other words, $\forall k$, $U_k$ is independent of $\epsilon_j$. This can be written as
    \begin{align}
    U_k(\cdots, \epsilon_j', \epsilon_{j+1},\cdots,a_{j-1},0,\cdots) = U_k(\cdots, \epsilon_j, \epsilon_{j+1},\cdots,a_{j-1},0,\cdots), \forall k,j\in\mathcal{K}_O, \forall \epsilon_j,\epsilon_j'
\end{align} 

\noindent (2) Reporting data with a privacy budget approaching 0 (i.e., strength of perturbation tending to infinity) is equivalent to not reporting data. Specifically, if $a_j=1$ and $\epsilon_j\rightarrow 0$, the utility of any data owner converges to that in the case $a_j=0$, written as
        \begin{align}
   \lim_{\epsilon_j\rightarrow 0^+} U_k(\cdots, \epsilon_j, \epsilon_{j+1},\cdots,a_{j-1},1,\cdots) = U_k(\cdots, \epsilon_j', \epsilon_{j+1},\cdots,a_{j-1},0,\cdots), \forall k,j\in\mathcal{K}_O, \forall \epsilon_j'
\end{align} 
\end{assumption}

We make the following remarks regarding Assumption~\ref{asm:regularity}. The first condition ensures that assigning privacy budgets to non-active data owners does not influence any utility functions, as they are not part of the data collaboration game. This prevents potential confusion in the utility functions. The second condition states that sharing data maximal perturbation (i.e., data that carry no informative content beyond random noise) neither benefits nor harms the utility of any data owner, under the assumption that the MA possesses sufficient capability to identify and analyse data regardless of its quality.

\noindent \textbf{Leader utility maximization}. Given the response $\bm{a}=\{a_{k}\}_{k\in\mathcal{K}_O}$ and $\bm{\epsilon}=\{\epsilon_{k}\}_{k\in\mathcal{K}_O}$ of all followers, the leader, i.e., data requester MA/stakeholder $1$, determines the data request specifications $\bm{d}=\{d_{k}\}_{k\in\mathcal{K}_O}$ by solving the following optimization problem:
\begin{align}
    \max_{\bm{d}\in\mathcal{B}}\quad & U_1\left(\bm{d};\bm{a}, \bm{\epsilon} \right)=W_1\left(\{G_{\epsilon_{k}(\bm{d})}(\bm{y}_k)\}_{k\in\mathcal{K}_O: a_{k} = 1}\right), \label{eq: general-UR-obj}
\end{align}
where the objective function \eqref{eq: general-UR-obj} includes the expected transport-related welfare benefit function $W_1\left(\cdot\right)$ that depends on all data $\left\{G_{\epsilon_{k'}}(\bm{y}_{k'})\right\}_{k':a_{k'}=1}$ exchanged among the stakeholders. Here, without loss of generality, we assume $\mathcal{B}$ is finite, since the distance thresholds can be discretized numerically.

\subsection{Equilibrium Analysis} 
We next analyze the data collaboration game. To this end, we first establish the equilibrium notion in Definition~\ref{dfn: Stackelberg-Nash equilibrium}. 

\begin{definition}[Stackelberg-Nash equilibrium (SNE)]
\label{dfn: Stackelberg-Nash equilibrium}
A strategy profile $(\bm{d}^*, \bm{a}^*, \bm{\epsilon}^*)$ is an SNE of the data collaboration game if and only if:

\noindent (1) $(a_k^*, \epsilon_k^*)$ are the NE of the follower $k$ given the leader strategy $d^*$
\begin{align}
    & (a_k^*, \epsilon_k^*) \in \arg\max_{a_k,\epsilon_k}\; U_k\bigl(a_k,\epsilon_k;\,a_{-k}^*,\epsilon_{-k}^*,\bm{d}^*\bigr),
    \quad \forall\,k\in\mathcal{K}_O,
\end{align}

\noindent (2) $\bm{d}^*$ is the leader’s best response anticipating the followers’ best responses, i.e., 
\begin{align}
    \bm{d}^* \in \arg\max_{\bm{d}\in\mathcal{B}}\; \left\{U_1\!\Bigl(\bm{d};\,\bm{a}^*,\bm{\epsilon}^*\Bigr) \mid (\bm{a}^*,\bm{\epsilon}^*) \in \mathrm{BR}(\bm{d})\right\},
\end{align}
where $\mathrm{BR}(\bm{d})$ represents the followers' best responses to leader decision $\bm{d}$, i.e., 
\begin{align}
    \mathrm{BR}(\bm{d})=\left\{(\bm{a}^*,\bm{\epsilon}^*): (a_k^*,\epsilon_k^*)\in\arg\max_{a_k,\epsilon_k}\; U_k\bigl(a_k,\epsilon_k;\,a_{-k}^*,\epsilon_{-k}^*,\bm{d}\bigr),
    \quad \forall\,k\in\mathcal{K}_O\right\} .
\end{align} 
\end{definition}

Following the equilibrium notion, we next perform equilibrium analysis in Section~\ref{ssec: Equilibrium analysis} and establish conditions for successful data collaboration in Section~\ref{ssec:conditions}. 

\subsubsection{Existence of equilibrium} \label{ssec: Equilibrium analysis}

We begin equilibrium analysis by making several realistic assumptions. Assumption~\ref{asm:distance} specifies a monotonically decreasing relationship between the expected distance and the privacy budget. This is a realistic assumption, as a higher privacy budget implies less noise added to the query statistics, resulting in generated FoQ trajectory points that are better aligned with the true data.

\begin{assumption}[Monotonically decreasing expected distance]\label{asm:distance}
The expected distance between distorted data $\tilde{\bm{y}}_{k} = G_{\epsilon_{k}}(\bm{y}_{k})$ and true data $\bm{y}_{k}$, denoted by $\Phi_k(\epsilon_k) = \bar{\Phi}_k\left(\bm{y}_{k},G_{\epsilon_{k}}(\bm{y}_{k})\right)$, decreases monotonically as $\epsilon_k$ increases. 
\end{assumption}

With Assumption~\ref{asm:distance}, we can rewrite Constraint~\eqref{eq: general-UO-cons1} as 
\begin{align}
    \phi_k \leq \epsilon_k \leq 1,~k\in\mathcal{K}_O,~\label{eq-UO-cons1-1}
\end{align}
where $\phi_k = \Phi^{-1}(d_k)$, which is well defined due to monotonicity. 

Next, we introduce assumptions on the utility function. Let us begin with a simplified formulation of its expression. 
Following Assumption~\ref{asm:regularity}, if a data owner $k$ chooses not to participate ($a_k=0$), the utility $U_k(\bm{\epsilon}, \bm{a})$ is independent of $\epsilon_k$, i.e., $U_k(\epsilon_k, \epsilon_{-k}, \bm{a}) = U_k(0_k, \epsilon_{-k}, \bm{a})$. To simplify the notation, we define $z_k=a_k\epsilon_k$. Then, the variable $z_k$ measures the level of information contributed by data owner $k$, characterizing both participation and the noise level. 
Accordingly, we define a simplified utility function as
\begin{align}
    \bar{U}_k(\bm{z}) = U_k(\bm{\epsilon},\bm{a})
\end{align}
where $z_k=0$ indicates $a_k=0$. Conversely, $z_k>0$, we can recover $a_k=1$ and $\epsilon_k=z_k$.  

Without loss of generality, let us assume $\bar{U}(\cdot)$ is twice differentiable. In many real transportation systems, the marginal value of data decreases with the increase of information provision. Therefore, the marginal benefits of having more participants and more accurate data also decrease as the privacy budget increases. This is summarized in the following assumption. 

\begin{assumption}[Decreasing marginal benefits of data sharing]\label{asm:marginal}
The marginal benefits of data sharing decrease in terms of variable $\bm{z}$, specifically, $\frac{\partial^2\bar{U}(\bm{z})}{\partial z_k\partial z_j} < 0$. 
\end{assumption}
 
Moreover, the utility of a data owner, whether participating in the system or not, tends to increase with more information provision, since a larger data pool generally leads to improved system services.
\begin{assumption}[Monotonicity in others' actions]\label{asm:monotonicity}
The utility of data owner $k$, $U_k$, is non-decreasing with respect to any other data owner's action, i.e., $\frac{\partial U_k}{\partial z_j} \geq 0, \quad \forall j \neq k$.
\end{assumption}

With these assumptions, we can simplify the data owner $k$'s model as follows: 
\begin{align}
    \max_{a_k, z_k}\quad & \bar{U}_k\left(z_k, z_{-k}\right),\label{eq-UO-obj}\\
    \mathrm{s.t.}\quad& \phi_{k}a_k \leq z_k \leq a_k,
    \label{eq-UO-cons1}\\ 
    &a_k \in \{0,1\},z_k\in [0,1]\label{eq-UO-cons3} 
\end{align}
where $\bar{U}_k$ is concave in $z_k$. With any given $a_k$, this optimization problem is a convex optimization problem. 

Due to the composition of a binary action $\bm{a}$ and a continuous action $\bm{z}$, the followers' game is challenging to analyze directly. We therefore adopt a two-stage approach: the upper stage determines $\bm{a}$, and the lower stage computes the NE given $\bm{a}$.
\begin{definition}[Two-stage equilibrium seeking process for the follower model] \label{dfn:two_stage}
    We adopt a two-stage process, whereby the upper stage determines $\bm{a}$, and the lower stage computes the NE given $\bm{a}$. A NE in this setting is defined as a pair $(\bm{a}^*, \bm{z}^*)$, such that (i) for fixed $\bm{a}^*$, no player can unilaterally increase their utility by deviating from $z^*_k$, i.e., $U_k(z_k^*, z_{-k}^*, \bm{a}^*) \ge U_k(z_k, z_{-k}^*, \bm{a}^*)$, and (ii) at the upper stage, no player unilaterally improve its utility by deviating from $\bm{a}^*$, i.e., $V_k(a_k^*, a_{-k}^*)\geq V_k(a_k, a_{-k}^*),\forall a_k\in\{0,1\},\forall k$, where $V_k(\bm{a})$ denotes the utility of player $k$ at the lower-stage equilibrium corresponding to $\bm{a}$. 
\end{definition}

\begin{prop}[Validity of the two-stage process]
Any equilibrium obtained from the two-stage process is an NE of the original followers’ game, and vice versa.
\end{prop}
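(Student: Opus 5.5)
The plan is to reduce both equilibrium notions to the same per-player optimality condition. The key device is splitting each follower's joint maximization over $(a_k,z_k)$ in \eqref{eq-UO-obj}--\eqref{eq-UO-cons3} into an outer choice of $a_k$ and an inner choice of $z_k$. For fixed opponents' profile $(a_{-k},z_{-k})$ and fixed $a_k$, write the branch value
\begin{align}
    v_k(a_k; a_{-k}, z_{-k}) = \max_{z_k:\ \phi_k a_k\le z_k\le a_k} \bar U_k(z_k,z_{-k}).
\end{align}
By Assumption~\ref{asm:regularity}(1), the branch $a_k=0$ forces $z_k=0$, so $v_k(0;\cdot)=\bar U_k(0,z_{-k})$. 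The branch $a_k=1$ is a concave maximization over the compact interval $[\phi_k,1]$, so the maximum is attained. Then $(a_k^*,z_k^*)$ is a best response in the original game if and only if two things hold: $z_k^*$ attains $v_k(a_k^*;\cdot)$, and $v_k(a_k^*;\cdot)\ge v_k(1-a_k^*;\cdot)$. I will state this as an elementary lemma, since $\max_{a,z}=\max_a\max_z$ over a finite outer set.

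\textbf{Two-stage $\Rightarrow$ NE.} Let $(\bm a^*,\bm z^*)$ be a two-stage equilibrium. Condition (i) of Definition~\ref{dfn:two_stage} is exactly the inner optimality in the lemma, because $z_{-k}^*$ is held fixed. For the outer comparison I need $v_k(a_k^*;a_{-k}^*,z_{-k}^*)\ge v_k(1-a_k^*;a_{-k}^*,z_{-k}^*)$. Condition (ii) gives $V_k(a_k^*,a_{-k}^*)\ge V_k(1-a_k^*,a_{-k}^*)$, where the right-hand side is evaluated at a re-equilibrated $z_{-k}$. Here I would use Assumption~\ref{asm:monotonicity} together with Assumption~\ref{asm:marginal}. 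Since $\partial^2\bar U_j/\partial z_j\partial z_k<0$, the lower-stage game is one of strategic substitutes. So when $k$ switches from $0$ to $1$ (raising $z_k$), the others' equilibrium $z_{-k}$ weakly decreases, and by Assumption~\ref{asm:monotonicity} this can only lower $k$'s payoff. The direction reverses for a switch from $1$ to $0$. This sign analysis is what relates $V_k(1-a_k^*,\cdot)$ to $v_k(1-a_k^*;\cdot,z_{-k}^*)$.

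\textbf{NE $\Rightarrow$ two-stage.} Given an NE $(\bm a^*,\bm z^*)$ of the original game, inner optimality for every $k$ shows that $\bm z^*$ is a lower-stage NE for $\bm a^*$, which is condition (i). It also shows $V_k(\bm a^*)=v_k(a_k^*;a_{-k}^*,z_{-k}^*)$. Condition (ii) then again needs the comparison between fixed-$z_{-k}$ and re-equilibrated deviation values, handled by the same substitutes/monotonicity argument.

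The main obstacle is exactly this mismatch. $V_k$ in Definition~\ref{dfn:two_stage} lets opponents re-equilibrate after $k$ flips $a_k$, whereas a Nash deviation holds $z_{-k}$ fixed. The sign argument above yields only one inequality direction for each flip, and I expect it to fail in general for one of them. If a clean two-sided argument does not go through, my fallback is as follows. I would read $V_k(a_k,a_{-k}^*)$ in (ii) as $v_k(a_k;a_{-k}^*,z_{-k}^*)$, evaluated at the current lower-stage equilibrium, which is the natural unilateral-deviation reading. Under that reading, both directions follow immediately from the lemma. As a complementary route, I would use Assumption~\ref{asm:regularity}(2) to embed $a_k=0$ as the limit $z_k\to0^+$ of the $a_k=1$ branch, which makes the outer binary choice a comparison of values on the set $\{0\}\cup[\phi_k,1]$.
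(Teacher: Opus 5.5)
\textbf{Verdict: the gap you found is real and cannot be closed, because the statement as written is false under the paper's own assumptions.} The paper's proof only asserts that no player can gain by unilaterally changing its participation or its privacy budget. That is exactly your fallback reading, with $z_{-k}^*$ held fixed. Definition~\ref{dfn:two_stage} and Theorem~\ref{thm:submodular} instead use the re-equilibrated payoff $V_k(\bm a)=\bar U_k(\bm z^*(\bm a))$. Under that definition the equivalence fails in both directions, already with two followers. Take $\bar U_k(\bm z)=2s-\tfrac12 s^2-\beta_k z_k$ with $s=z_1+z_2$ and $\phi_1=\phi_2=0.3$. Then the cross-partial is $-1$, own concavity is strict, $\partial\bar U_k/\partial z_j=2-s\ge 0$ on $[0,1]^2$, and $\bar U_k(\bm 0)=0$. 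The best responses $z_k=\min\{1,\max\{0.3,\,2-\beta_k-z_j\}\}$ have a unique fixed point for every $\bm a$ used below.

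(a) Let $\beta_1=1.3$, $\beta_2=1.2$. Then $\bm a=(1,1)$, $\bm z=(0.3,0.5)$ is an NE of the original game: player~1 gets $0.89\ge\bar U_1(0,0.5)=0.875$, and player~2 gets $0.68\ge\bar U_2(0.3,0)=0.555$. But if player~1 drops out, re-equilibration raises $z_2$ to $0.8$, so $V_1(0,1)=1.28>0.89=V_1(1,1)$. Hence this profile is not a two-stage equilibrium (free-riding).

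(b) Let $\beta_1=1.0$, $\beta_2=1.2$. Then $\bm a=(0,1)$, $\bm z=(0,0.8)$ is a two-stage equilibrium, since $V_1(1,1)=\bar U_1(0.7,0.3)=0.8<1.28=V_1(0,1)$ and $V_2(0,1)=0.32>0$. But against the fixed $z_2=0.8$, player~1 gains by playing $(a_1,z_1)=(1,0.3)$: $\bar U_1(0.3,0.8)=1.295>1.28$. Hence this profile is not an NE (crowding-out).

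So your substitutes-plus-monotonicity step cannot become a proof. It delivers only the two harmless one-sided inequalities: two-stage $\Rightarrow$ NE for a $1\to 0$ flip, and NE $\Rightarrow$ two-stage for a $0\to 1$ flip. Even those rely on a monotone comparative-statics claim about the others' re-equilibrated profile, which is only clear when there is a single opponent. Your fallback, via the lemma $\max_{a_k,z_k}=\max_{a_k}\max_{z_k}$, is a complete and correct proof of the frozen-$z_{-k}^*$ version, and it is in substance the paper's argument. The limit route through Assumption~\ref{asm:regularity}(2) adds nothing, because it concerns continuity at $z_k=0$, not re-equilibration.

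You should, however, present the fallback as a change of definition rather than a reading of the existing one. With $z_{-k}^*$ frozen, the upper-stage payoff depends on $\bm z^*(\bm a^*)$ and is no longer a function of $\bm a$ alone. A proposition proved that way therefore does not justify enumerating $\bm a$ with re-equilibrated payoffs, and it does not connect to the submodularity in Theorem~\ref{thm:submodular}. The accurate conclusion is that the two concepts genuinely differ. The two-stage equilibrium is the natural equilibrium of a sequential follower game in which $\bm a$ is committed before $\bm z$. It is not the NE of the simultaneous game, so the proposition needs to be restated rather than proved.
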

The result follows immediately from the definition of a NE: at the equilibrium of the two-stage process, no player can improve their utility by unilaterally deviating in either their participation decision or their privacy budget. The converse holds by the same argument.

This two-stage process results in an algorithm for finding the equilibrium of the original follower game. As there are typically very few MPs (e.g., 2 or 3), we can first enumerate all possible upper-level decisions $\bm{a}$, whereby we compute the Nash equilibrium given $\bm{a}$.

 \begin{theorem}[Existence of lower-stage equilibrium] \label{thm:lower_stage}
    Fix $\bm{a}\in\{0,1\}^K$. Then, there exists a unique lower-stage follower NE. 
\end{theorem}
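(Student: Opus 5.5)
Fix $\bm{a}$ and let $\mathcal{A}=\{k\in\mathcal{K}_O: a_k=1\}$. For every $k\notin\mathcal{A}$, constraint \eqref{eq-UO-cons1} forces $z_k=0$. That player's strategy set is therefore the singleton $\{0\}$, and it plays no role in the equilibrium question. The lower-stage game thus reduces to a continuous game among the players in $\mathcal{A}$. Player $k\in\mathcal{A}$ chooses $z_k\in[\phi_k,1]$, a nonempty compact interval, and maximizes $\bar U_k(z_k,z_{-k})$, which the paper states is concave in $z_k$ and which is twice differentiable.

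\textbf{Existence.} For each $k\in\mathcal{A}$, the best-response correspondence $z_{-k}\mapsto\arg\max_{z_k\in[\phi_k,1]}\bar U_k$ is nonempty, convex-valued, and upper hemicontinuous, by Berge's maximum theorem together with concavity. Applying Kakutani's fixed-point theorem (equivalently, the Debreu--Glicksberg--Fan theorem) on the product of intervals $\prod_{k\in\mathcal{A}}[\phi_k,1]$ gives a pure NE. Alternatively, Assumption~\ref{asm:marginal} makes the game one with strategic substitutes on a lattice. After reversing the order of one coordinate this is only valid for two players, so for general $|\mathcal{A}|$ I would rely on the Kakutani route.

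\textbf{Uniqueness.} This is the main obstacle. Assumption~\ref{asm:marginal} alone ($\partial^2\bar U/\partial z_k\partial z_j<0$) does not in general give uniqueness, so I plan to use Rosen's diagonal strict concavity criterion. Let $g(\bm z)=(\partial\bar U_k/\partial z_k)_{k\in\mathcal{A}}$ be the pseudo-gradient. Uniqueness follows if the symmetrized Jacobian $J+J^\top$ is negative definite on the feasible box. The diagonal entries of $J$ are the own second derivatives, which are negative by Assumption~\ref{asm:marginal} with $j=k$. The off-diagonal entries are the negative cross-partials. I would first try to show negative definiteness via diagonal dominance, i.e., $|\partial^2\bar U_k/\partial z_k^2|>\sum_{j\neq k}|\partial^2\bar U_k/\partial z_k\partial z_j|$. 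If that fails, I would exploit the paper's reading of Assumption~\ref{asm:marginal} as a statement about a common function $\bar U$ with $\partial^2\bar U/\partial z_k\partial z_j<0$ for all $k,j$. The common welfare term is the same $W$ evaluated on shared data, and the privacy term $-\beta_k z_k$ is linear. Under that reading the game is essentially a potential game, with potential $P(\bm z)=W(\bm z)-\sum_k\beta_k z_k$. Concavity of $P$ would then make the NE the unique maximizer of $P$ on the convex box, provided the Hessian is negative definite. I expect to need to state this (diagonal dominance or strict concavity of the potential) as the precise interpretation of Assumption~\ref{asm:marginal}.

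\textbf{Conclusion.} Having established existence and uniqueness on the active players' box, I will extend the profile by $z_k=0$ for $k\notin\mathcal{A}$. By Assumption~\ref{asm:regularity}(1), these inactive coordinates do not affect any utility through $\epsilon_k$, so the extended profile is the unique lower-stage NE for the fixed $\bm a$.
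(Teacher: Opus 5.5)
Your existence argument matches the paper's. The paper also notes that each active player's set $[\phi_k,1]$ is a nonempty compact interval, with inactive players pinned at $0$. It notes that $\bar U_k$ is continuous and concave in own action, and then invokes Glicksberg's theorem.

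The divergence is uniqueness, and there you have found a real defect rather than a missing trick. The paper derives uniqueness from the same citation, but Glicksberg's theorem gives existence only; strict own-concavity only makes best responses single-valued. In fact uniqueness fails under the paper's assumptions. Take two active players with $\phi_1=\phi_2=\phi\in(0,\tfrac13)$, $\beta_1=\beta_2=3$, and a common welfare term
\[
W(\bm z)=4(z_1+z_2)-\tfrac12\,(z_1^2+z_2^2)-2z_1z_2,\qquad \bar U_k(\bm z)=W(\bm z)-3z_k .
\]
All second partials are negative ($-1$ own, $-2$ cross). For $j\neq k$ on $[0,1]^2$ we have $\partial\bar U_k/\partial z_j=4-z_j-2z_k\ge 1$, and $\bar U_k(\bm 0)=0$. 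So Assumptions~\ref{asm:marginal}--\ref{asm:monotonicity} and strict own-concavity all hold. Yet $\partial\bar U_k/\partial z_k=1-z_k-2z_j$ gives $BR_k(z_j)=\min\{1,\max\{\phi,1-2z_j\}\}$. Hence $(\tfrac13,\tfrac13)$, $(1-2\phi,\phi)$ and $(\phi,1-2\phi)$ are three distinct lower-stage equilibria.

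So the uniqueness half of your plan cannot be completed from the stated hypotheses. The fix is not a sharper reading of Assumption~\ref{asm:marginal}: that assumption fixes only the sign of the cross-partials, while uniqueness depends on their size relative to own curvature. The example is even an exact potential game with common $W$. But $\nabla^2 W$ has entries $-1,-2,-2,-1$ and is indefinite. Strict concavity of the potential, like Rosen's condition, is therefore a genuinely new hypothesis that must be added to the theorem.

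One step of your plan also needs correcting. Row dominance, $|\partial^2\bar U_k/\partial z_k^2|>\sum_{j\neq k}|\partial^2\bar U_k/\partial z_k\partial z_j|$, does not make $J+J^\top$ negative definite. For example, the matrix $J$ with rows $(-1,-0.01)$ and $(-10,-11)$ is row dominant with all entries negative, yet $J+J^\top$ has negative determinant. You have two options:
\begin{itemize}
\item impose dominance on the symmetrized matrix; or
\item use row dominance, with a uniform margin, directly. A mean-value argument on the first-order conditions shows the best-response map is a contraction in the sup norm on $\prod_k[\phi_k,1]$, which gives uniqueness without Rosen.
\end{itemize}

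What you can honestly prove is existence under the paper's assumptions, and uniqueness under such an added condition. Note that Theorem~\ref{thm:submodular} treats $\bm z^*(\bm a)$ as a single-valued function, so it inherits the same requirement.
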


\begin{proof}
    We notice that (i) $z_k\in[\phi_k,1]$ for active players with $a_k=1$ and $z_k=0$ for non-active players with $a_k=0$, which is non-empty, compact, and convex. The utility function is continuous in $\bm{z}$ and strictly concave in own action. Then, by \cite{glicksberg1952further}, there is a unique NE. 
\end{proof}
As mentioned, the utility function $V_k(\bm{a})$ denotes the utility of player $k$ at the lower-stage equilibrium corresponding to $\bm{a}$. 
Then, we analyze the upper-stage game, which is formulated as a game with finite players, each with binary actions. We first show that the utility function is submodular in Theorem~\ref{thm:submodular} (see Appendix~\ref{app:submodular} for proof). 
\begin{theorem}[Submodularity of the upper-stage binary game]\label{thm:submodular}
The upper-stage utility function \(V_k(\bm{a}):=\bar{U}_k\!\big(\bm{z}^*(\bm{a})\big)\), at the lower-stage equilibrium $\bm{z}^*(\bm{a})$ with fixed $\bm{a}$,  has decreasing differences in \((a_i,a_j)\) for every \(i\neq j\); i.e.,
\begin{align}   
\label{eq:submodular}
V_k(1,1_\ell,a_{-k\ell})-V_k(1,0_\ell,a_{-k\ell})
\;\le\;
V_k(0,1_\ell,a_{-k\ell})-V_k(0,0_\ell,a_{-k\ell}),
\quad \forall\,a_{-k\ell}\in\{0,1\}^{K-2}.s
\end{align}

\end{theorem}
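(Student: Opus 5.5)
The plan is to reduce the decreasing-differences inequality \eqref{eq:submodular} to a comparison of cross-effects of $z_\ell$ on $\bar U_k$. These cross-effects are controlled by Assumption~\ref{asm:marginal}, and the lower-stage equilibrium map $\bm{z}^*(\bm{a})$ is well defined by Theorem~\ref{thm:lower_stage}.

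First, I would write each side as an integral in $z_\ell$ along a path. By Assumption~\ref{asm:regularity}(2), setting $a_\ell=0$ is equivalent to $a_\ell=1$ with $z_\ell\to 0^+$. So $V_k(a_k,1_\ell,a_{-k\ell})-V_k(a_k,0_\ell,a_{-k\ell})$ can be viewed as the change in $\bar U_k$ when player $\ell$'s contribution moves from $0$ to its equilibrium value $z_\ell^*$, with the other coordinates re-equilibrating.

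I would introduce the parametrized equilibrium $\bm{z}^*(t)$, where $\ell$'s coordinate is fixed at $t\in[0,z^*_\ell]$ and the remaining active players play their lower-stage NE. By the envelope theorem applied to $k$'s own optimization (when $a_k=1$), and trivially when $a_k=0$ since $z_k\equiv 0$, one gets
\begin{align}
\frac{d}{dt}\bar U_k(\bm{z}^*(t)) = \partial_{z_\ell}\bar U_k + \sum_{j\neq k,\ell}\partial_{z_j}\bar U_k\,\frac{dz_j^*}{dt}.
\end{align}

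Second, I would compare the two cases $a_k=1$ and $a_k=0$. The difference between them is that in the first, $z_k^*>0$ (at least $\phi_k$), while in the second $z_k=0$. By Assumption~\ref{asm:marginal}, $\partial^2\bar U_k/\partial z_\ell\partial z_k<0$. So the direct marginal effect $\partial_{z_\ell}\bar U_k$ evaluated at a profile with larger $z_k$ is smaller, and integrating over $t$ gives the inequality for the direct term.

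For the indirect terms, the strategic substitutes structure implied by Assumption~\ref{asm:marginal} applies. Implicitly differentiating first-order conditions gives $dz_j^*/dz_\ell\le 0$. Since $\partial_{z_j}\bar U_k\ge0$ (Assumption~\ref{asm:monotonicity}), these terms are nonpositive. I would then argue they are larger in magnitude, or at least not smaller, when $k$ is active. Failing that, I would bound them jointly with the direct term using a monotone comparative statics argument (Topkis) on the lattice of profiles ordered appropriately.

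The main obstacle is this indirect equilibrium-response term. Decreasing differences of $\bar U_k$ alone do not automatically transfer to the equilibrium value function, because other players' responses $z_j^*$ depend on both $a_k$ and $a_\ell$ and can have sign-ambiguous second-order effects.

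My first attempt would be to handle the case where the constraints $\phi_j\le z_j\le 1$ bind, so responses are locally constant. This gives the result directly from Assumption~\ref{asm:marginal}. For interior responses, I would use the uniqueness from Theorem~\ref{thm:lower_stage} together with a diagonal-dominance condition implicit in strict concavity to show the response Jacobian has the sign pattern needed. If that fails, I would state the theorem under the additional mild condition that indirect effects are dominated by direct cross-partials.
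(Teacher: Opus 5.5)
\textbf{Genuine gap: the indirect third-party term cannot be controlled, because the statement fails once a third follower is active.} Your plan does not close. The step you single out as the main obstacle cannot be closed under the stated assumptions: as soon as a third follower $j$ is active, \eqref{eq:submodular} can fail.

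Take followers $k,\ell,j$ with $\bar U_i(\bm z)=\ln(1+z_k+z_\ell+z_j)-\beta_i z_i$, $\phi_i=0.1$, $\beta_k=1/2$, $\beta_j=5/9$, $\beta_\ell=9/10$, and $a_j=1$. This instance satisfies every assumption:
\begin{itemize}
\item every cross-partial equals $-(1+S)^{-2}<0$, with $S=z_k+z_\ell+z_j$;
\item $\partial\bar U_i/\partial z_m>0$ for $m\neq i$;
\item each $\bar U_i$ is strictly concave in $z_i$, and $\bar U_i(\bm 0)=0$.
\end{itemize}
An active player's best response is $1/\beta_i-1-S_{-i}$ clipped to $[0.1,1]$. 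The unique lower-stage equilibria are:
\begin{itemize}
\item $z_j=0.8$ for $(a_k,a_\ell)=(0,0)$;
\item $(z_\ell,z_j)=(0.1,0.7)$ for $(0,1)$;
\item $(z_k,z_j)=(0.9,0.1)$ for $(1,0)$;
\item $(z_k,z_\ell,z_j)=(0.8,0.1,0.1)$ for $(1,1)$.
\end{itemize}
Hence $V_k(0,1_\ell,1_j)-V_k(0,0_\ell,1_j)=\ln 1.8-\ln 1.8=0$, while $V_k(1,1_\ell,1_j)-V_k(1,0_\ell,1_j)=(\ln 2-0.4)-(\ln 2-0.45)=0.05>0$.

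The failure comes exactly from your indirect term, and it has the opposite sign to what you intended to show.
\begin{itemize}
\item When $k$ is inactive, $j$ is interior and offsets $\ell$'s entry one-for-one, so the indirect effect cancels the direct one.
\item When $k$ is active, $j$ already sits at its floor $\phi_j$, so the indirect effect is zero, and $\ell$'s entry lets $k$ cut its own costly contribution.
\end{itemize}
So the indirect effect is larger in magnitude when $k$ is \emph{inactive}. Your proposed repairs do not help:
\begin{itemize}
\item Strict concavity supplies no diagonal dominance; here $-\frac{\partial^2\bar U_j/\partial z_j\partial z_\ell}{\partial^2\bar U_j/\partial z_j^2}=-1$.
\item A binding-versus-interior case split does not help, because the two sides of \eqref{eq:submodular} can lie in different regimes.
\item Adding a dominance condition would change the theorem rather than prove it.
\end{itemize}
Separately, your path integral runs over $t\in[0,z_\ell^*]$ without noting that $z_\ell^*$ itself depends on $a_k$. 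Comparing the two integrals requires $\ell$'s best response to be nonincreasing in $z_k$.

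\textbf{Relation to the paper's proof, and what survives.} The paper's proof follows your route. It writes each difference as an integral of $\partial\bar U_k/\partial z_\ell\ge 0$ along $z_\ell$, compares the integrands via Assumption~\ref{asm:marginal}, and compares the integration ranges via the negative slope of best responses. It simply omits the re-equilibration of the other followers, so the hole you flagged is present there too.

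What survives is the case with no other active follower ($a_{-k\ell}=\bm 0$, e.g.\ the two-MP setting of the numerical study). There no envelope argument is needed. Let $z_k^0$ be optimal for $k$ against $z_\ell=0$, $(z_k^*,z_\ell^*)$ the joint equilibrium, and $\hat z_\ell$ optimal for $\ell$ against $z_k=0$. Then
\begin{align*}
V_k(1,1_\ell)-V_k(1,0_\ell)&\le\bar U_k(z_k^*,z_\ell^*)-\bar U_k(z_k^*,0)=\int_0^{z_\ell^*}\frac{\partial\bar U_k}{\partial z_\ell}(z_k^*,t)\,dt\\
&\le\int_0^{\hat z_\ell}\frac{\partial\bar U_k}{\partial z_\ell}(0,t)\,dt=V_k(0,1_\ell)-V_k(0,0_\ell).
\end{align*}
The first step uses $\bar U_k(z_k^0,0)\ge\bar U_k(z_k^*,0)$. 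The second uses Assumption~\ref{asm:marginal}, Assumption~\ref{asm:monotonicity}, and $z_\ell^*=BR_\ell(z_k^*)\le BR_\ell(0)=\hat z_\ell$.
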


Note that a pure NE always exists in two-player submodular games. However, in general, such games may lack a pure NE when there are more than three players. We provide a counterexample in Appendix~\ref{sec: Counterexample}. Therefore, for cases with more than three players, we compute the mixed equilibrium for the upper-stage game.

Once the followers’ game equilibrium is obtained, the leader MA determines the optimal distance threshold $d_i$ via simple enumeration. To this end, we can demonstrate the existence of an SNE for the proposed data collaboration game.
\begin{theorem}[Existence of a SNE for data collaboration]\label{thm:Existence}
    There exists an SNE for the data collaboration game.
\end{theorem}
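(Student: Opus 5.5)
The plan is backward induction: first show that the followers' game has an equilibrium for every leader strategy $\bm{d}\in\mathcal{B}$, then exploit that $\mathcal{B}$ is finite to find a maximizing $\bm{d}^*$.

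\textbf{Step 1 (followers, fixed $\bm{d}$).} Fix $\bm{d}\in\mathcal{B}$. This fixes the lower bounds $\phi_k=\Phi_k^{-1}(d_k)$ through \eqref{eq-UO-cons1-1}, which are well defined by Assumption~\ref{asm:distance}.
\begin{itemize}
\item For each of the finitely many participation profiles $\bm{a}\in\{0,1\}^{K-1}$, Theorem~\ref{thm:lower_stage} gives a unique lower-stage equilibrium $\bm{z}^*(\bm{a};\bm{d})$. Hence the upper-stage payoff $V_k(\bm{a};\bm{d})$ is a well-defined single value, with no selection ambiguity.
\item The upper stage is then a finite game: finitely many players, each with actions in $\{0,1\}$.
\item By Nash's theorem it has an equilibrium in mixed strategies over $\bm{a}$. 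When there are two followers, submodularity (Theorem~\ref{thm:submodular}) gives a pure equilibrium directly. For more followers we use the mixed equilibrium, as the paper already does, and read $\mathrm{BR}(\bm{d})$ as including these profiles.
\item By the Proposition on the validity of the two-stage process, each such equilibrium is an NE of the original followers' game. So $\mathrm{BR}(\bm{d})\neq\emptyset$ for every $\bm{d}\in\mathcal{B}$.
\end{itemize}

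\textbf{Step 2 (leader).} For each $\bm{d}\in\mathcal{B}$, define the leader's value
\[
\Psi(\bm{d}) = \max\{U_1(\bm{d};\bm{a},\bm{\epsilon}) : (\bm{a},\bm{\epsilon})\in \mathrm{BR}(\bm{d})\}.
\]
If only mixed equilibria exist, $U_1$ is replaced by its expected value under the mixture.
\begin{itemize}
\item The set $\mathrm{BR}(\bm{d})$ is a subset of a finite union (over $\bm{a}$) of singletons $\bm{z}^*(\bm{a};\bm{d})$, together with the mixtures of these.
\item The pure candidates are therefore finite, and the max over them is attained. The optimistic selection inherent in Definition~\ref{dfn: Stackelberg-Nash equilibrium}(2) lets the leader pick its preferred equilibrium, so only attainment matters, not uniqueness.
\item Since $\mathcal{B}$ is finite, $\bm{d}^*\in\arg\max_{\bm{d}\in\mathcal{B}}\Psi(\bm{d})$ exists.
\item Taking $(\bm{a}^*,\bm{\epsilon}^*)$ to be the maximizing follower equilibrium at $\bm{d}^*$, with $\epsilon_k^*=z_k^*$ when $a_k^*=1$ and arbitrary otherwise (harmless by Assumption~\ref{asm:regularity}(1)), both conditions of Definition~\ref{dfn: Stackelberg-Nash equilibrium} hold.
\end{itemize}

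\textbf{Main obstacle: attainment when only mixed equilibria exist.} With mixed equilibria, the set of equilibrium mixtures is infinite, so the maximum of the leader's expected utility over it needs a separate justification. I would handle it as follows.
\begin{itemize}
\item The set of Nash equilibria of a finite game is a nonempty compact subset of the product of simplices.
\item The leader's expected utility is multilinear, hence continuous, in the mixing probabilities, since it averages the finitely many values $W_1$ evaluated at $\bm{z}^*(\bm{a};\bm{d})$.
\item Weierstrass then gives a maximizer.
\end{itemize}
Two further points need care:
\begin{itemize}
\item Each equilibrium $\bm{z}$ must satisfy the open constraint $\epsilon_k\in(0,1)$, whereas Step 1 works on $[\phi_k,1]$. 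I would either assume $\phi_k>0$ (which follows because $d_k$ is finite) or note that $\epsilon_k=1$ is treated as the closure.
\item The argument relies on a best-response interpretation of the two-stage process for mixed $\bm{a}$. I would state this explicitly as the paper's convention.
\end{itemize}
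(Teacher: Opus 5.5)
Your proposal is correct and takes the same backward-induction route as the paper. You first show that the followers' game has an equilibrium for every $\bm{d}$, using Theorem~\ref{thm:lower_stage} together with Theorem~\ref{thm:submodular}, or a mixed upper-stage equilibrium when there are more followers, and then enumerate over the finite set $\mathcal{B}$. Your extra points fill in details the paper's two-line proof leaves implicit: the leader's optimum over $\mathrm{BR}(\bm{d})$ is attained via optimistic selection and compactness of the mixed-equilibrium set, and mixed upper-stage profiles must be admitted into $\mathrm{BR}(\bm{d})$ by convention.
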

\begin{proof}
By Theorems~\ref{thm:lower_stage} and \ref{thm:submodular}, for any leader decision $\bm d$, the followers' game admits at least one. The leader can then determine the distance thresholds $\bm{b}\in\mathcal{B}$ via simple enumeration in the finite set $\mathcal{B}$. 
\end{proof}

Finally, we make the following remark regarding the solution process to find an equilibrium of the Stackelberg game. 
\begin{remark}[Solution of SNE]
We make two remarks. First, since the number of MPs is small in a typical transportation data collaboration game, we determine the leader's decision and each follower's upper-stage decision by enumeration. Because each follower's lower-stage problem is strongly convex, the corresponding lower-stage equilibrium is unique. Hence, for any fixed leader's decision and followers' upper-stage decisions, the lower-stage equilibrium can be obtained by solving the associated primal-dual KKT system. 
Second, in more general settings with more data providers, computing an SNE is equivalent to solving a mathematical program with equilibrium constraints (MPEC), which in our setting reduces to a mathematical program with complementarity constraints (MPCC). Since MPCCs typically violate standard constraint qualifications, a mature remedy is to apply a relaxation or regularization of the complementarity constraints and solve a sequence of smooth nonlinear programs (NLPs) that converges to an MPCC solution \citep{scholtes2001convergence,hoheisel2013theoretical}. Each relaxed NLP can then be handled by standard large-scale NLP algorithms such as sequential quadratic programming (SQP) or interior-point methods.  In transportation applications, a common approach is to tackle MPECs via decomposition and alternating updates, leveraging the established variational-inequality (VI) solvers (e.g., diagonalization coupled with SDMSA/MSA) for the lower-level traffic equilibrium, instead of solving the resulting MPCC as a single large-scale NLP \citep{liu2023congested}.
\end{remark}

\subsubsection{Conditions for successful data collaboration} \label{ssec:conditions}
We next analyze the condition for a successful data collaboration (i.e., at least one MP/follower decides to share data), which means that the $\bm a=\bm 0$ is not a strategy profile for the SNE of the data collaboration game. We assume that $\bar U_k(\bm 0)=0$ for no one deciding to share data (i.e., $\bm a=\bm 0$).

We first define the collaboration gain for each MP $k$ in Definition \ref{dfn:Collaboration gain}.
\begin{definition}[Collaboration gain for each MP $k$]\label{dfn:Collaboration gain}
For any fixed $z_{-k}$, the collaboration gain of MP $k$ is defined as
\begin{align}
\Delta_k(\phi_k, z_{-k}) = \max_{z_k\in[\phi_k,1]}\bar{U}_k(z_k,z_{-k}) - \bar{U}_k(0,z_{-k})
\label{eq:Delta}
\end{align}
\end{definition}

The collaboration gain for each MP $k$ has two properties established in Propositions~\ref{prop:phi} and \ref{prop:beta}.

\begin{prop}[Effect of the MA's data quality requirement $d_k$]\label{prop:phi}
For any fixed $z_{-k}$, the collaboration gain $\Delta_k(\phi_k,z_{-k})$ is non-increasing in $\phi_k$.
Equivalently, relaxing the MA's minimum quality requirement (i.e., decreasing $\phi_k$, induced by a larger $d_k$) increases MP $k$'s incentive to share data.
\end{prop}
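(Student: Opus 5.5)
The argument is a nested-feasible-set monotonicity argument. Fix $z_{-k}$. The second term of \eqref{eq:Delta}, $\bar U_k(0,z_{-k})$, does not depend on $\phi_k$. It therefore suffices to show that the value function
\begin{align*}
M_k(\phi_k) := \max_{z_k\in[\phi_k,1]}\bar U_k(z_k,z_{-k})
\end{align*}
is non-increasing in $\phi_k$.

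First, I will confirm that $M_k$ is well defined. By the rewriting \eqref{eq-UO-cons1-1}, $\phi_k\in(0,1]$, so $[\phi_k,1]$ is a non-empty compact interval. Since $\bar U_k$ is twice differentiable, it is continuous in $z_k$, and the maximum is attained by Weierstrass.

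The key step is this. Take $\phi_k\le\phi_k'$. Then $[\phi_k',1]\subseteq[\phi_k,1]$. The maximizer $z_k'$ attaining $M_k(\phi_k')$ is feasible for the problem with threshold $\phi_k$, so
\begin{align*}
M_k(\phi_k)\ge \bar U_k(z_k',z_{-k}) = M_k(\phi_k').
\end{align*}
Subtracting the $\phi_k$-independent term $\bar U_k(0,z_{-k})$ gives $\Delta_k(\phi_k,z_{-k})\ge\Delta_k(\phi_k',z_{-k})$, which is the claimed monotonicity.

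For the "equivalently" part, I invoke Assumption~\ref{asm:distance}. Because $\Phi_k$ is monotonically decreasing in $\epsilon_k$, its inverse is also decreasing, so $\phi_k=\Phi_k^{-1}(d_k)$ is non-increasing in $d_k$. A larger $d_k$ (a looser quality requirement) thus yields a smaller $\phi_k$ and, by the step above, a weakly larger $\Delta_k$. This is exactly MP $k$'s incentive to choose $a_k=1$ over $a_k=0$.

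No step is a genuine obstacle; the only care needed is bookkeeping. One point is that the monotonicity holds pointwise for fixed $z_{-k}$, not at the equilibrium $z^*_{-k}(\bm a)$, which may itself move with $d_k$. The other is that, if the maximum is not attained at an endpoint value $\phi_k$ approaching $0$, the argument should be phrased with suprema, which preserves the inclusion argument verbatim.
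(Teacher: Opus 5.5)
Your proof is correct and is the paper's argument: for $\phi_k'\ge\phi_k$ the feasible sets are nested, $[\phi_k',1]\subseteq[\phi_k,1]$, so the maximum cannot increase, and the $\phi_k$-independent term $\bar U_k(0,z_{-k})$ cancels. Your extra justification of the ``equivalently'' clause via Assumption~\ref{asm:distance}, that $\phi_k=\Phi_k^{-1}(d_k)$ is non-increasing in $d_k$, fills in a step the paper leaves implicit.
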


\begin{proof}
For $\phi_k' \ge \phi_k$, the feasible set satisfies $[\phi_k',1]\subseteq[\phi_k,1]$.
Hence, $\max_{z_k\in[\phi_k',1]}\bar U_k(z_k,z_{-k}) \le \max_{z_k\in[\phi_k,1]}\bar U_k(z_k,z_{-k}),$
Therefore, $\Delta_k(\phi_k',z_{-k}) \le \Delta_k(\phi_k,z_{-k})$.
\end{proof}

\begin{prop}[Effect of privacy cost weight $\beta_k$]
\label{prop:beta}
For any fixed $(\phi_k,z_{-k})$, the collaboration gain $\Delta_k(\phi_k,z_{-k})$ is decreasing in $\beta_k$.
\end{prop}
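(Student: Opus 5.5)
The plan is to make the dependence on $\beta_k$ explicit in the utility and then use an envelope-type comparison. By \eqref{eq: general-UO-obj}, with $z_k=a_k\epsilon_k$ the simplified utility separates as
\begin{align*}
\bar U_k(z_k,z_{-k};\beta_k)=\bar W_k(z_k,z_{-k})-\beta_k z_k ,
\end{align*}
where $\bar W_k$ is the expected welfare written in terms of $\bm z$. The welfare part does not depend on $\beta_k$. At $z_k=0$ the privacy term vanishes, so $\bar U_k(0,z_{-k};\beta_k)=\bar W_k(0,z_{-k})$ is independent of $\beta_k$. Consequently, in \eqref{eq:Delta} only the first term $M(\beta_k):=\max_{z_k\in[\phi_k,1]}\big(\bar W_k(z_k,z_{-k})-\beta_k z_k\big)$ varies with $\beta_k$.

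Next I would compare two weights $\beta_k'>\beta_k$ with $(\phi_k,z_{-k})$ fixed. Let $z_k'$ be a maximizer for $\beta_k'$. It exists because the interval is compact and $\bar U_k$ is continuous (it is concave in $z_k$ by the setup after Assumption~\ref{asm:monotonicity}). Then
\begin{align*}
M(\beta_k)\ \ge\ \bar W_k(z_k',z_{-k})-\beta_k z_k' \ =\ M(\beta_k')+(\beta_k'-\beta_k)z_k'\ \ge\ M(\beta_k')+(\beta_k'-\beta_k)\phi_k .
\end{align*}
Since $\phi_k=\Phi^{-1}(d_k)>0$, because privacy budgets lie in $(0,1)$ and \eqref{eq-UO-cons1-1} holds, the last bracket is strictly positive. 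This gives $M(\beta_k)>M(\beta_k')$, and hence $\Delta_k(\phi_k,z_{-k})$ is strictly decreasing in $\beta_k$.

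As a complementary view, I would note that $M$ is a pointwise maximum of functions affine in $\beta_k$, each with slope $-z_k\le-\phi_k<0$. Therefore $M$ is convex and decreasing. Danskin's theorem gives $\partial M/\partial\beta_k=-z_k^*(\beta_k)$ wherever the maximizer is unique, which holds under strict concavity.

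The main obstacle is justifying the separable form. In particular, I need that $\beta_k$ enters only through the explicit penalty $\beta_k\epsilon_ka_k$, and that the welfare $W_k$ and the constraint threshold $\phi_k$ do not depend on $\beta_k$. This follows from \eqref{eq: general-UO-obj}–\eqref{eq: general-UO-cons1}, where $W_k$ depends only on the shared data and $\phi_k$ only on $d_k$. I also need care at the boundary case $\phi_k=0$. If the paper allows the closed interval $[0,1]$, the argument only yields non-increasing, and strictness then requires the maximizer $z_k'$ to be positive, which I would note as a caveat.
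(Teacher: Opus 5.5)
Your proof is correct and follows the same route as the paper. The paper notes that under the separable form $\bar U_k(z_k,z_{-k})-\beta_k z_k$ one has $\partial \bar U_k/\partial\beta_k=-z_k<0$, and passes this pointwise monotonicity directly to $\Delta_k$; your version is more careful in three ways: you make explicit that the baseline term $\bar U_k(0,z_{-k})$ does not depend on $\beta_k$, you justify the step from the objective to its maximum via the explicit gap $(\beta_k'-\beta_k)\phi_k$, and you correctly flag that strictness relies on $\phi_k>0$, which the paper's inequality $-z_k<0$ uses only implicitly.
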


\begin{proof}
As $\bar U_k(z_k,z_{-k};\beta_k)= \bar U_k(z_k,z_{-k})-\beta_k z_k$, it holds that $\frac{\partial \bar U_k(z_k,z_{-k};\beta_k)}{\partial \beta_k} = -z_k < 0.$
Therefore, $\bar U_k(z_k,z_{-k};\beta_k)$ is decreasing in $\beta_k$. Hence, the collaboration gain $\Delta_k(\phi_k,z_{-k})$ is decreasing in $\beta_k$.
\end{proof}

Then, we can obtain the sufficient condition as follows.
\begin{prop}[Sufficient condition for a successful data collaboration]\label{pps:sufficient_condition}
If there exists an MP $k$ such that
\begin{align}
\Delta_k(\phi_k,\bm 0) > 0,
\label{eq:delta_positive}
\end{align}
then the all-zero profile $\bm a=\bm 0$ cannot be the best response of followers. 
\end{prop}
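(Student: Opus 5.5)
The plan is a unilateral-deviation argument against the profile $\bm a=\bm 0$. Suppose, for contradiction, that $\bm a=\bm 0$ is a best response of the followers. By constraint \eqref{eq-UO-cons1}, $a_j=0$ forces $z_j=0$ for every $j$. So the associated lower-stage profile is $\bm z=\bm 0$, and every follower obtains $V_k(\bm 0)=\bar U_k(\bm 0)=0$ by the normalization adopted at the start of Section~\ref{ssec:conditions}. Assumption~\ref{asm:regularity}(1) guarantees that this value does not depend on the (irrelevant) privacy budgets of the non-participants, so $\bar U_k$ is well defined at $\bm 0$.

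Next I would fix the MP $k$ for which \eqref{eq:delta_positive} holds and let it deviate unilaterally to $a_k=1$, keeping $a_{-k}=0$ and hence $z_{-k}=\bm 0$. With $a_{-k}=0$, the other players' lower-stage choices remain pinned at $0$. The lower-stage equilibrium of Theorem~\ref{thm:lower_stage} for $\bm a=(1_k,0_{-k})$ therefore reduces to player $k$ solving $\max_{z_k\in[\phi_k,1]}\bar U_k(z_k,\bm 0)$. This problem has a maximizer $z_k^*$ by continuity and compactness; it is unique by strict concavity in the own action. Hence
\[
V_k(1,0_{-k})=\max_{z_k\in[\phi_k,1]}\bar U_k(z_k,\bm 0)=\Delta_k(\phi_k,\bm 0)+\bar U_k(0,\bm 0)=\Delta_k(\phi_k,\bm 0)>0=V_k(0,0_{-k}),
\]
using Definition~\ref{dfn:Collaboration gain}. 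This is a strictly profitable unilateral deviation at the upper stage, so condition (ii) of Definition~\ref{dfn:two_stage} fails at $\bm a=\bm 0$. By the validity proposition for the two-stage process, $\bm a=\bm 0$ is therefore not part of any NE, i.e., not a best response of the followers for the given $\bm d$.

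The only delicate step is confirming that, after $k$'s deviation, the other players' continuous actions stay at zero, so that $V_k(1,0_{-k})$ is evaluated at $z_{-k}=\bm 0$ exactly as in $\Delta_k(\phi_k,\bm 0)$. This holds because the feasible set for $a_j=0$ is the singleton $\{0\}$. I would state this explicitly, and I would also remark that the argument needs $\phi_k\le 1$ so that the feasible interval is nonempty, which is implicit in the definition of $\Delta_k$. For the mixed-strategy case with more players, the same deviation shows that the pure profile $\bm 0$ is not an equilibrium, which is all the statement requires.
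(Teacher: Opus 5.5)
Your proposal is correct and follows the same unilateral-deviation argument as the paper. At $\bm a=\bm 0$, MP $k$ earns $\bar U_k(\bm 0)=0$, while switching to $a_k=1$ with $z_{-k}=\bm 0$ pinned yields $\max_{z_k\in[\phi_k,1]}\bar U_k(z_k,\bm 0)=\Delta_k(\phi_k,\bm 0)>0$. Routing this through $V_k$ and the two-stage definition, and noting that $a_j=0$ forces $z_j=0$ and that $\phi_k\le 1$, only makes explicit what the paper's short proof leaves implicit.
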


\begin{proof}
Consider the followers' upper-stage game induced by any leader strategy $\bm d$.
When $\bm{a}=\bm{0}$, MP $k$ obtains payoff $\bar{U}_k(\bm 0)=0$. If $\Delta_k(\phi_k,\bm{0})>0$, there exists a feasible $z_k\in[\phi_k,1]$ such that $\bar{U}_k(z_k,\bm{0}) > \bar{U}_k(\bm{0})=0.$
Hence, MP $k$ has a positive deviation from $a_k=0$ to $a_k=1$.
Therefore, $\bm a=\bm 0$ cannot be the best response of followers.
\end{proof}

We make the following remarks regarding Proposition~\ref{pps:sufficient_condition} to satisfy the condition $\Delta_k(\phi_k,\bm 0)>0$.
First, it is realistic in practice. As MA has no access to traffic observation, a small batch of information from one MP is sufficient to improve $U_k$ significantly (trajectory data with 10\% is enough for traffic state estimation). 
Second, MA can provide a higher $d_k$ when requesting the data, which allows MP $k$ to optimally choose its information level $z_k\in[\phi_k,1]$ to satisfy the condition as described in Proposition \ref{prop:phi}.
Third, MPs may incur additional fixed costs when preparing data (e.g., staff time and operational effort), which can be modeled as a negative constant added to the left-hand side of the participation condition. To compensate for such costs, the MA may offer exclusive benefits to participating MPs (e.g., monetary incentives), which can be represented as a positive constant added to the left-hand side of the condition. This mechanism design is regarded as future work.

\section{Model specification for traffic signal optimization}\label{sec: Model specification}
This section specifies the general model for a collaborative traffic signal optimization problem, whereby a single MA aims to leverage the trajectory data provided by multiple MPs to update its fixed-time traffic signal timing plans.

\subsection{Collaborative signal optimization problem}
We consider a two-way arterial comprising a set of intersections $\mathcal{I}$ with cardinality $I = |\mathcal{I}|$. Each intersection $i \in \mathcal{I}$ operates under a cyclic signal plan with a set of phases $\mathcal{P}_i$, whereby each phase $p \in \mathcal{P}_i$ controls a set of movements at this intersection $\mathcal{M}_{p}$ indexed by $m$. 
Due to the requirement of coordinated signal control, the signal timing plan consists of a common cycle length $C$ for all intersections, green splits $\bm{g} = \{g_{ip}\}_{p\in\mathcal{P}_i, i\in\mathcal{I}}$ of each phase $p$ in intersection $i$, and offsets $\bm{o} = \{o_i\}_{i\in\mathcal{I}}$ for each intersection $i$.

The MA aims to update the outdated signal timing plan $\left(C^{\text{old}}, \bm{g}^{\text{old}}, \bm{o}^{\text{old}}\right)$ to a new optimized plan $\left(C^*, \bm{g}^*, \bm{o}^*\right)$ by applying a signal control policy $\pi$ based on the estimated traffic demand of all movements $\hat{\bm{q}} = \{\hat{q}_m\}_{m \in \mathcal{M}_{p},p\in\mathcal{P}_i,i\in \mathcal{I}}$, i.e., $\left(C^*, \bm{g}^*, \bm{o}^*\right) = \pi \left(\hat{\bm{q}}\right)$.
For simplicity, we use $m \in \mathcal{M}$ to represent all movements in the whole network directly in the following sections.
In this paper, we follow a simple and often-used method that combines (1) the Webster’s formula for calculating signal cycles and green splits with (2) a band-based offset optimization method for calculating offsets. This combined approach has been adopted in many works \citep{little1981maxband, gartner1991multi, zhang2015band} and continues to be used in recent studies and practice. Note that other arterial-level signal controllers can also be integrated into our methodological framework.
  
Both $C$ and $\bm{g}$ can be obtained using Webster's formula
\begin{align}
    &C^* = \max_{i\in \mathcal{I}}\left(\frac{1.5L_i+5}{1-\sum_{p\in \mathcal{P}_i} \max_{m\in \mathcal{M}_{p}}\left(\frac{\hat{q}_{m}}{q_{m}^c}\right)}\right), \label{eq-wb-1}\\
    &g^*_{ip} = \frac{\max_{m\in \mathcal{M}_{p}}\left(\frac{\hat{q}_{m}}{q_{m}^c}\right)}{\sum_{p\in \mathcal{P}_i} \max_{m\in \mathcal{M}_{p}}\left(\frac{\hat{q}_{m}}{q_{m}^c}\right)}C, \quad \forall i\in \mathcal{I}, p \in \mathcal{P}_i, \label{eq-wb-2}
\end{align}
where $L_i$ denotes the total lost time per cycle at intersection $i$, $\max_{m\in \mathcal{M}_{p}}\left(\frac{\hat{q}_{m}}{q_{m}^c}\right)$ denotes the flow ratio of the critical movement of phase $p$. $\hat{q}_{m}$ and $q_{m}^c$ denote the estimated traffic demand and capacity flow of movement $m$, respectively.

The updated offsets $\bm{o}^*$ can be obtained by solving a band-based optimization problem, which aims to maximize the green bandwidth for both outbound and inbound flow and hence reduce vehicle stops. Specifically, this can be formulated as a mixed-integer linear programming problem \citep{little1981maxband, gartner1991multi, zhang2015band}.
We leverage one classic and well-established method, MAXBAND \citep{little1981maxband}, and leverage the adjusted factor for outbound and inbound weight in \citet{gartner1991multi, zhang2015band} to optimize the offsets. In this optimization problem, we have $3I+1$ decision variables, including outbound bandwidth $b$, inbound bandwidth $\bar{b}$, $I$ relative offsets $\bm{\zeta}=\{\zeta_i\}_{i=1}^{I}$ for outbound, $I$ relative offsets $\bar{\bm{\zeta}}=\{\bar{\zeta}_i\}_{i=1}^{I}$ for inbound, and $I-1$ loop integer variable $\bm{M} = \{M_i\}_{i=1}^{I-1}$.
The problem can be written as
\begin{subequations}
\begin{align}
    \max_{b, \bar{b}, \bm{\zeta}, \bar{\bm{\zeta}}, \bm{m}}\quad & \left(\max_{m\in \mathcal{M}^{out}}\frac{\hat{q}_{m}}{q_{m}^c}\right)^\alpha b + \left(\max_{m\in \mathcal{M}^{in}}\frac{\hat{q}_{m}}{q_{m}^c}\right)^\alpha \bar{b}, \label{eq-SC-obj}\\
    \mathrm{s.t.}\quad&
    \zeta_i+b \leq 1-r_i, \quad \forall i \in\{1, \ldots, I\}, \label{eq-SC-cons1} \\&
    \bar{\zeta}_i+\bar{b} \leq 1-\bar{r}_i, \quad \forall i \in\{1, \ldots, I\}, \label{eq-SC-cons2} \\&
    \left(\zeta_i+\bar{\zeta}_i\right)-\left(\zeta_{i+1}+\bar{\zeta}_{i+1}\right)+\left(t_i+\bar{t}_i\right)+\Delta_i-\Delta_{i+1} \notag \\
    & =-\frac{1}{2}\left(r_i+\bar{r}_i\right)+\frac{1}{2}\left(r_{i+1}+\bar{r}_{i+1}\right) +\left(\bar{e}_i+e_{i+1}\right)+M_i \quad \forall i \in\{1, \ldots, I-1\}, \label{eq-SC-cons3} \\&
    M_i\in \mathbb{Z}, \quad \forall i \in\{1 \ldots I-1\}, \label{eq-SC-cons4} \\&
    b, \bar{b}, \zeta_i, \bar{\zeta}_i \geq 0, \quad \forall i \in\{1, \ldots, I\}, \label{eq-SC-cons5}
\end{align} \label{eq:SC}
\end{subequations}
where the objective function (\ref{eq-SC-obj}) includes both the outbound and inbound bandwidth, weighted by the flow ratio of the critical movement $\max_{m\in \mathcal{M}^{out}}\frac{\hat{q}_{m}}{q_{m}^c}$ and $\max_{m\in \mathcal{M}^{in}}\frac{\hat{q}_{m}}{q_{m}^c}$ among all outbound and inbound movements, respectively. $\alpha$ denotes the exponential power with default value $1$. 
Constraints (\ref{eq-SC-cons1}-\ref{eq-SC-cons5}) ensure that the outbound and inbound green bands are feasible at each intersection. $r_i\left(\bar{r}_i\right)$ denotes the outbound (inbound) red time at intersection $i$ in cycles. $\Delta_i$ denotes the time from center of $\bar{r}_i$ to nearest center of $r_i$ in cycles. 
$t_i (\bar{t})$ denotes the travel time from intersection $i$ to intersection $i+1$ outbound (inbound) in cycles.
$e_i\left(\bar{e}_i\right)$ denotes the queue clearance time in cycles, an advance of the outbound (inbound) bandwidth upon leaving intersection $i$.
For more details, readers can refer to \citet{little1981maxband}.

Then, the offsets $o_{i}$ between intersection $i$ and $i+1$ can be computed as follows
\begin{align}
    &o_{i, i+1} + \frac{1}{2}r_{i+1} + \zeta_{i+1} + c_{i+1} = \frac{1}{2}r_{i} + \zeta_{i} + t_{i} \label{eq-SC-cons6}
\end{align}
where the offset for the inbound direction is identical to that of the outbound, as we assume the green splits remain unchanged in both directions. $o_1$ is set to $0$ to calibrate the offset system.

To summarize, MA adopts a control policy $\left(C^*, \bm{g}^*, \bm{o}^*\right) = \pi \left(\hat{\bm{q}}\right)$ that combines Webster's formula with a band-based offset optimization method, as demonstrated in Eqs.~\eqref{eq-wb-1}-\eqref{eq-SC-cons6}.
The resulting average delay of all vehicles traveling along the arterial can be represented by 
\begin{align}
    W_1(C^*, \bm{g}^*, \bm{o}^*, \bm{q}) = W_1 \left(\pi(\hat{\bm{q}}), \bm{q}\right).
\end{align}
where $W_1(\cdot)$ denotes the delay function of all vehicles due to MA's fixed-time signal control policy, which can be calculated by an analytical approach \citep{manual2000highway, yao2019optimization} or a simulation-based approach.
Notice that the resulting average delay can also be represented by a function of the true demand $\bm{q}$ and the estimated demand $\hat{\bm{q}}$ because of the policy $\left(C^*, \bm{g}^*, \bm{o}^*\right) = \pi \left(\hat{\bm{q}}\right)$.

In this signal optimization problem, MA requires the necessary information to estimate key traffic demand parameters $\hat{\bm{q}} = \{\hat{q}_m\}_{m \in \mathcal{M}}$, which are essential in Equations \eqref{eq-wb-1}-\eqref{eq-SC-obj}. We make the following remark on the motivation for the MA to establish data collaboration with MPs.
\begin{remark}[Motivation for MA to establish data collaboration with MPs]\label{rmk:motivation}
The reasons for MA to seek collaboration with MPs are two-fold. First, these key traffic demand parameters are not directly observable by MA due to the lack of sensing devices (e.g., loop detectors). Such a setting is prevalent, as many urban arterial roads still lack reliable coverage of roadside sensors such as loop detectors and traffic cameras due to high installation and maintenance costs, and thus adopt fixed-time signal controllers. This limitation naturally motivates MA to seek/request data from other stakeholders in the transportation network, thereby initiating a data collaboration game. 
Second, MPs can operate a fleet of vehicles within the transportation network, with various market shares and penetration rates. Hence, each of them has access to the trajectory data of its own fleet collected over a recent period of service operations within the arterial. These trajectory data capture fine-grained spatiotemporal movements of vehicles as they traverse signalized intersections, providing rich behavioral insights such as queuing patterns and discharge rates. These key indicators can be leveraged to infer critical traffic demand parameters, whereby the effectiveness of using such data for traffic estimation has been validated in both research and real-world deployments \citep{wang2024traffic}. It is worth noting that trajectory data provide significantly broader and more flexible spatial coverage of traffic observations than loop detectors, which are limited to fixed installation points.
\end{remark}

\subsection{Observation function of data requester (MA)}\label{ssec-str-ma}
Then, we can specify the observation function $F_{\sigma_{k}}(\cdot)$ used by MA in its strategy. Specifically, MA requests trajectory points corresponding to the front of queues (FoQs) to estimate key traffic demand parameters. Let us denote the trajectory data by $\bm{x}_k$ with trajectories indexed by $n\in\mathcal{N}_k = \{1, 2, \cdots, N_k\}$. As defined in Definition~\ref{dfn:obs}, these FoQ trajectory points indicate the time when vehicles join the queue at the intersection due to a red signal. Since the FoQ represents the boundary between stopped and free-flow traffic states, these trajectory points capture critical information about the arrival flow at the intersection, in light of the Lighthill–Whitham–Richards (LWR) theory \citep{lighthill1955kinematic, richards1956shock}. 

\begin{figure}[ht]
    \centering
    \includegraphics[width=0.8\textwidth]{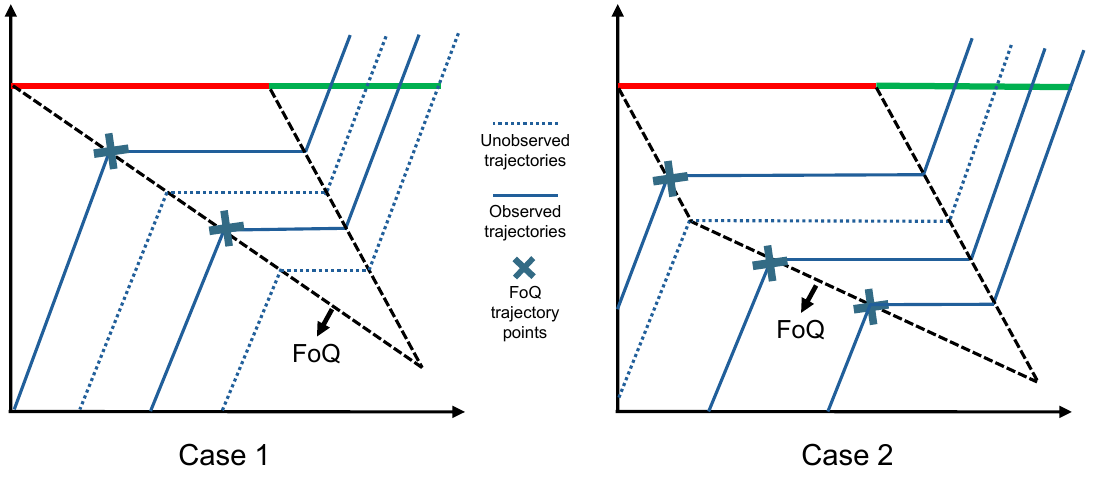}
    \caption{Illustration of the shockwaves on the time-space diagram. The blue dashed lines refer to unobserved vehicle trajectories, while the blue solid lines refer to observed vehicle trajectories. The blue cross-shaped points refer to FoQ trajectory points.}
    \label{fig:foq}
\end{figure}

\begin{definition}[FoQ trajectory points as the observation function]\label{dfn:obs}
For intersection $i$, an FoQ trajectory point is defined as a tuple $(h_{kn}^{m}, t_{kn}^{m})$, 
with $h_{kn}^{m}$ and $t_{kn}^{m}$, respectively, denoting the position and time when vehicle $n$ operated by data owner $k$ joins the queue in movement $m\in\mathcal{M}_{p}, p\in \mathcal{P}_i, i\in\mathcal{I}$. More specifically, the time $t_{kn}^{m}$ is calculated as the elapsed time since the red onset of movement $m$ in the current signal cycle, and $h_{kn}^{m}$ represents the distance from the stopped position to the intersection stop line of movement $m$, measured as a negative value.
Then, we define the observation function $F_{\sigma_k}$ to map vehicle trajectory data $\bm{x}_k$ to corresponding FoQ trajectory points $\bm{y}_k= \big\{\left(h_{kn}^{m},\, t_{kn}^{m}\right)\big\}_{n \in \mathcal{N}_k, m \in \mathcal{M}}$, i.e., $\bm{y}_k = F_{\sigma_k}(\bm{x}_k)$. Notice that only trajectories with a complete stop at the intersection contribute to FoQ trajectory points. 
\end{definition}

With the collected FoQ trajectory points $\bm{y}$, MA will leverage the LWR theory to estimate the arrival flow $\hat{\bm{q}} = B(\bm{y})$, where $B(\cdot)$ represents the LWR theory-based procedure that involves the estimation of the FoQ curve. 
For each movement $m$, we assume the traffic dynamics to be characterized by a triangular fundamental diagram $Q_m(k)=\min\{v_{f,m} k+ w_m(k_{j,m}-k)\}$. 
Based on this model, we can illustrate the shockwaves on the time-space diagram  (see Figure~\ref{fig:foq}). There are two representative cases: 
(1) the arrival flow remains constant within a signal cycle, yielding a linear FoQ curve, and 
(2) the arrival flow starts at saturation flow due to upstream discharge (Seg. a) and then drops to a constant arrival (Seg. b), yielding a piecewise linear FoQ curve with the slope of the first segment being $w_m$. 
In both cases, the arrival flow can be reconstructed from the slope $\psi_m$ of the FoQ curve according to the LWR theory.  

Therefore, we aim to estimate the slope $\psi_m$ of the FoQ curve.  To this end, we consider a simplified scenario where the demand for movement $m$ is constant across all signal cycles\footnote{This assumption can be relaxed by treating the slope $\psi_{m,c}$ of each signal cycle $c$ as a Gaussian-distributed random variable, and formulating the problem using a linear mixed model (LMM). For example, in Case 1, the LMM can be written as $h=\psi_{m,c}t+\xi_m,~\psi_{m,c}\sim N(\psi_{m}, \sigma^2)$. This allows the model to capture practical scenarios with cycle-to-cycle variations in arrival flow. However, this paper does not pursue this generalization, as it is beyond the scope of our primary focus.}. Under this assumption, the FoQ trajectory points from multiple cycles can be aggregated into a single representative cycle, thereby increasing the available data for estimation. We then apply linear regression to fit the FoQ curves in both cases, with the results summarised in Table~\ref{tab:FoQ estimation}. Notice that in Table~\ref{tab:FoQ estimation}, we provide not only the point estimates of the coefficients but also the distributions of the true parameters, derived from standard linear regression theory. 
The distributional estimates are essential for integrating heterogeneous data collected from multiple MPs in a statistically coherent manner. It is worth noting that since MPs often have a long series of data, the sample size $N$ tends to be large. As a result, the Student's $t$-distribution asymptotically approximates the Gaussian distribution. The estimates from different MPs are integrated using a Bayesian framework to obtain the estimates of the true slopes $\psi_m \sim \mathcal{N}(\hat{\psi}_m, \hat{\sigma}_m)$, which will be further used to estimate the arrival flow $\hat{q}_m$ for each movement $m$. The estimated arrival flow $\bm{\hat{q}}$ will be used to derive the signal timing plan with policy $\left(C^*, \bm{g}^*, \bm{o}^*\right) = \pi \left(\hat{\bm{q}}\right)$. 

\begin{table}[htbp]
    \centering
        \caption{FoQ curve estimation. The FoQ curve is modeled as a linear function for Case 1 and a piecewise linear function in Case 2, as shown in Figure~\ref{fig:foq}. In the formulations, $\bm{t}_{k}^{m}=\{t_{k}^{mn}\}_{n\in\mathcal{N}}$ and $\bm{h}_{k}^{m}=\{h_{k}^{mn}\}_{n\in\mathcal{N}}$, respectively, represent the time and space coordinates of FoQ points. The parameters $\psi$ and $\gamma$ represent the slope and intercept to be estimated, while $\Lambda$ collects aggregated quantities directly calculated from FoQ points. Hatted variables, i.e., $\hat{\sigma}$ and $\hat{\psi}$, denote the corresponding statistical estimates obtained from the FoQ points. The variable $\mathcal{T}_n$ represents the student-$t$ distribution with $n$ degrees of freedom.  }
    \begin{tabular}{p{1.8cm}|p{6cm}|p{8.5cm}}
    \hline\hline
    Case & Model & Model estimates \\\hline
    Case 1  &  $h = \psi_{m,k}t + \xi_{m,k},~\xi_{m,k}\sim \mathcal{N}(0,\sigma^2_{m,k})$ & Aggregated quantities from FoQ points: \newline $\displaystyle \Lambda_{mk}^{\text{TH}} = \sum_{n=1}^N t_{kn}^{m}h_{kn}^{m}, ~ \Lambda_{mk}^{\text{T}} = \sum_{n=1}^N (t_{kn}^{m})^2,~\Lambda_{mk}^{\text{H}} = \sum_{n=1}^N (h_{kn}^{m})^2$ \newline
    Statistical estimates of slope $\psi_{m,k}$ and the standard deviation of residuals $\sigma_{m,k}$: \newline 
    $\displaystyle \hat{\psi}_{m,k} = \frac{\Lambda_{mk}^{\text{TH}}}{\Lambda_{mk}^{\text{T}}},~\hat{\sigma}_{m,k}^2 = \frac{1}{N - 1}\left(\Lambda_{mk}^{\text{H}}-\frac{(\Lambda_{mk}^{\text{TH}})^2}{\Lambda_{mk}^{\text{T}}}\right)$\newline
    $\displaystyle\psi_{m,k}\sim \mathcal{T}_{N-1}\left(\hat{\psi}_{m,k}, \frac{\hat{\sigma}_{m,k}^2}{\Lambda_{mk}^{\text{T}}}\right) \approx \mathcal{N}\left(\hat{\psi}_{m,k}, \frac{\hat{\sigma}_{m,k}^2}{\Lambda_{mk}^{\text{T}}}\right)$
 \\\hline  
    Case 2 \newline (Seg. a) &   $h = wt + \gamma_{m,k} + \xi_{m,k},~\xi_{m,k}\sim \mathcal{N}(0,\sigma^2_{m,k})$    
    &     Statistical estimates of intercept $\gamma_{m,k}$ and the standard deviation of residuals $\sigma_{m,k}$: \newline 
    $\displaystyle   \gamma_{m,k} = \frac{1}{N}\sum_{n=1}^N\left(h_{kn}^{m} - w t_{kn}^{m}\right)$ \newline 
    $\displaystyle \hat{\sigma}_{m,k}^2 = \frac{1}{N - 1} \sum_{i=1}^n (h_{kn}^{m} - w t_{kn}^{m} - \gamma_{m,k})^2$ \newline 
    $\displaystyle\gamma_{m,k}\sim \mathcal{T}_{N-1}\left(\hat{\gamma}_{m,k}, \frac{\hat{\sigma}_{m,k}^2}{N}\right) \approx \mathcal{N}\left(\hat{\gamma}_{m,k}, \frac{\hat{\sigma}_{m,k}^2}{N}\right)$
 \\\hline  
    Case 2 \newline (Seg. b) &   $h = \psi_{m,k}t + \gamma_{m,k}' + \xi_{m,k},~\xi_{m,k}\sim \mathcal{N}(0,\sigma^2_{m,k})$    
    &  Statistical estimates of slope $\psi_{m,k}$, intercept $\gamma_{m,k}$, and the standard deviation of residuals $\sigma_{m,k}$: \newline 
    $\displaystyle  \bm{Z}_{m,k} = [\bm{t}_{k}^{m}, \mathbb{I}_{N\times1}] $\newline 
    $[\hat{\psi}_{m,k},\hat{\gamma}_{m,k}]^{\top}=(\bm{Z}_{m,k}^\top\bm{Z}_{m,k})^{-1}\bm{Z}_{m,k}^\top\bm{h}_{k}^{m}$ \newline
    $\displaystyle \hat{\sigma}_{m,k}^2 = \frac{1}{N - 2} \sum_{i=1}^n \|\bm{h}_{k}^{m} - \bm{Z}_{m,k}[\hat{\psi}_{m,k},\hat{\gamma}_{m,k}]^{\top}\|^2$ \newline 
    $\psi_{m,k}\sim\mathcal{T}_{N-2}\left(\hat{\psi}_{m,k}, \hat{\sigma}_{m,k}^2(\bm{Z}_{m,k}^\top\bm{Z}_{m,k})^{-1}\right) \approx \mathcal{N}\left(\hat{\psi}_{m,k}, \hat{\sigma}_{m,k}^2(\bm{Z}_{m,k}^\top\bm{Z}_{m,k})^{-1}\right)$
    \\\hline\hline
    \end{tabular}
    \label{tab:FoQ estimation}
\end{table}

\subsection{Privacy-preserving mechanism of data owners (MPs)}\label{ssec-str-mp}
Then, we specify the privacy-preserving mechanism of MPs in their strategies. Specifically, we introduce the privacy considerations and the privacy-preserving mechanism for each follower MP in Section \ref{ssc: Privacy considerations} and \ref{ssc: Privacy-preserving mechanism}, respectively. The privacy objectives will be formalized and preserved using differential privacy (DP) mechanisms tailored to both trajectory-level and count-level data sensitivity.

\subsubsection{Privacy considerations}\label{ssc: Privacy considerations}
In the traffic signal optimization problem, we use DP to characterize privacy, which is a mathematically rigorous definition of privacy that leads to mechanisms with strong guarantees against privacy leakage from data sharing or analysis. 
Consider a dataset $D = \{\tau_1, \tau_2, \cdots, \tau_N\}$, where each $\tau_i$ represents a data record (e.g., a vehicle trajectory or an FoQ trajectory point). Suppose an external party wishes to query this dataset to extract some statistical insights with a query function denoted by $f(\cdot)$. Typical queries include the total number of data points, the mean of the data points, etc. 
The key idea behind DP is to carefully perturb these statistical outputs such that adjacent datasets (i.e., slightly different datasets) yield statistically indistinguishable outputs, thereby mitigating the risk of inference attacks. 
DP has been widely applied to privacy-preserving data generation, where synthetic data is produced based on perturbed key statistics to ensure privacy while preserving utility.
 
In this work, each data owner $k\in\mathcal{K}_O$ is interested in two types of privacy notions: (1) \textit{trajectory participation privacy}, indicating whether the existence of a particular vehicle's trajectory in the shared dataset can be inferred by adversaries, and (2)  \textit{fleet size or distribution privacy}, indicating whether the aggregate characteristics of the data owner's vehicle fleet, such as the number of active vehicles during certain periods or their spatial distribution, can be inferred by adversaries from the shared data. 
It is worth noting that the fleet size or spatial fleet distribution, if leaked, could reveal operational scale, market share, or demand patterns of the MP. 

To apply DP, we first formalize our privacy notions by specifying the corresponding adjacency relations in Definitions~\ref{dfn:traj_ad} and~\ref{dfn:count_ad}.

\begin{definition}[Trajectory-level adjacency] \label{dfn:traj_ad}
Two datasets $D$ and $D'$ are trajectory-level adjacent, denoted by $D \sim_{\text{traj}} D'$, if $D$ is obtained from $D'$ by removing one data entry (i.e., $D \subset D',~|D|=|D'|-1$) or vice versa (i.e., $D' \subset D,~|D'|=|D|-1$). 
\end{definition}
We make two remarks regarding trajectory-level adjacency in Definition~\ref{dfn:traj_ad}. First, such an adjacency definition corresponds to trajectory participation privacy, and any DP mechanism under this notion seeks to prevent adversaries from inferring the inclusion of individual trajectories. Second, with simple calculation, the condition can also be written as $|D \backslash D'| + |D' \backslash D| = 1$. 

\begin{definition}[Count-level adjacent datasets] \label{dfn:count_ad}
Two datasets $D$ and $D'$ are count-level adjacent, denoted by $D \sim_{\text{cnt}} D'$, if $D$ is obtained from $D'$ by removing $b$ data entries (i.e., $D \subset D',~|D|=|D'|-b$) or vice versa (i.e., $D' \subset D,~|D'|=|D|-b$). 
\end{definition}
We make two remarks regarding count-level adjacency in Definition~\ref{dfn:count_ad}.
First, this adjacency notion naturally corresponds to variations in fleet size or spatial fleet distribution, where a fixed number of vehicle trajectories is added or removed.
Second, the adjacency relationship assumes that when the fleet size changes by $b$, all other data entries remain unchanged. This assumption is reasonable for an MP that provides ride-hailing or mapping services, as when a portion of users leave or join the platform within a short time period (e.g., one month), the remaining users tend to stay due to strong user stickiness.

Next, we introduce the notion of DP for both privacy requirements, as defined in Definitions~\ref{dfn:dp_t} and \ref{dfn:dp_c}. In particular, we adopt $(\epsilon, \delta)$-DP, which requires the output of a randomized mechanism to be statistically indistinguishable. Here, the parameter $\epsilon$ is typically considered as a privacy budget, and the privacy protection becomes more stringent as this value becomes lower. 

\begin{definition}[Trajectory-level $(\epsilon_{\text{traj}}, \delta_{\text{traj}})$-DP]
\label{dfn:dp_t}
For $\epsilon_{\text{traj}}, \delta_{\text{traj}}>0$, a randomized mechanism $M: \mathcal{D} \rightarrow \mathcal{X}$ that maps datasets $D\in \mathcal{D}$ into an output space $M\left(D\right) \in \mathcal{X}$ satisfies $(\epsilon_{\text{traj}}, \delta_{\text{traj}})$-differential privacy, if for any $D \sim_{\text{traj}} D'$ and any event $E \subset \mathcal{X}$, we have
\begin{align}
\mathbb{P}\left[M\left(D\right) \in E\right] \leq e^{\epsilon_{\text{traj}}} \mathbb{P}\left[M\left(D'\right) \in E\right] + \delta_{\text{traj}}
\end{align}
\end{definition}

\begin{definition}[Count-level $(\epsilon_b, \delta_b)$-DP]
\label{dfn:dp_c}
For $\epsilon_b, \delta_b>0$, a randomized mechanism $M: \mathcal{D} \rightarrow \mathcal{X}$ that maps datasets $D\in \mathcal{D}$ into an output space $M\left(D\right) \in \mathcal{X}$ satisfies $(\epsilon_b, \delta_b)$-differential privacy, if for any $D \sim_{\text{cnt}} D'$ and any event $E \subset \mathcal{X}$, we have
\begin{align}
\mathbb{P}\left[M\left(D\right) \in E\right] \leq e^{\epsilon_b} \mathbb{P}\left[M\left(D'\right) \in E\right] + \delta_b
\end{align}
\end{definition}

We now relate the two definitions of DP via Theorem~\ref{thm:conversion}.

\begin{theorem}[Conversion from trajectory-level DP to count-level DP] \label{thm:conversion}
If a randomized mechanism $M$ satisfies trajectory-level $(\epsilon,\delta)$-DP, then it also satisfies count-level $\left(b\epsilon, \delta\sum_{b'=0}^b e^{b'\epsilon}\right)$-DP.
\end{theorem}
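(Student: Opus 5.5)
This is the standard group-privacy argument: build a chain of trajectory-level adjacent datasets and apply the $(\epsilon,\delta)$ inequality once per link.

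\textbf{Chain construction.} Fix count-level adjacent datasets $D \sim_{\text{cnt}} D'$. Without loss of generality take $D \subset D'$ with $|D'| = |D| + b$; the reverse direction is symmetric, since Definition~\ref{dfn:traj_ad} is symmetric. List the added entries as $\tau_1,\dots,\tau_b$ and set
\begin{align}
D_0 = D,\qquad D_{j} = D_{j-1}\cup\{\tau_j\},\quad j=1,\dots,b,
\end{align}
so that $D_b = D'$. Each consecutive pair satisfies $D_{j-1}\sim_{\text{traj}} D_j$. Both the sets and the symmetry of trajectory-level adjacency let us use the DP inequality in either direction along the chain.

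\textbf{Induction.} Fix an event $E$ and write $p_j = \mathbb{P}[M(D_j)\in E]$. I will show by induction on $j$ that
\begin{align}
p_0 \le e^{j\epsilon} p_j + \delta \sum_{b'=0}^{j-1} e^{b'\epsilon}.
\end{align}
The base case $j=0$ is trivial. For the step, trajectory-level DP gives $p_j \le e^{\epsilon}p_{j+1} + \delta$. Multiplying by $e^{j\epsilon}$ and substituting into the induction hypothesis yields
\begin{align}
p_0 \le e^{(j+1)\epsilon}p_{j+1} + \delta e^{j\epsilon} + \delta\sum_{b'=0}^{j-1}e^{b'\epsilon},
\end{align}
which is the claim for $j+1$. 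At $j=b$ the additive term is $\delta\sum_{b'=0}^{b-1}e^{b'\epsilon}$. This is at most the stated $\delta\sum_{b'=0}^{b}e^{b'\epsilon}$, because every term is nonnegative. The same chain, read backwards with adjacency used in the other direction, gives the inequality with the roles of $D$ and $D'$ swapped. Together these give count-level $\bigl(b\epsilon,\delta\sum_{b'=0}^b e^{b'\epsilon}\bigr)$-DP under Definition~\ref{dfn:dp_c}.

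\textbf{Main obstacle.} The only real care point is the bookkeeping of the $\delta$ terms. Each $\delta$ picks up the exponential factor of all the links that precede it in the chain, so the accumulated error is a geometric sum rather than simply $b\delta$. I also need to confirm that Definition~\ref{dfn:dp_c} imposes the inequality for both orderings of the pair; the symmetric chain argument covers this. The stated bound is slightly loose (its sum runs to $b$ rather than $b-1$), so the proof only needs to establish the tighter inequality and then relax it.
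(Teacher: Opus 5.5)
Your proposal is correct and follows essentially the same route as the paper. The paper's lemma builds the same chain of trajectory-level adjacent datasets (removing rather than adding one entry at a time), and its proof unrolls the $(\epsilon,\delta)$ inequality along that chain to reach the additive term $\delta\sum_{b'=0}^{b-1}e^{b'\epsilon}$. Your explicit induction, your handling of both orderings of the pair, and your final relaxation to the sum running up to $b$ just make precise steps that the paper leaves implicit.
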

\begin{proof}
See Appendix~\ref{thm:conversion}. 
\end{proof}

Theorem~\ref{thm:conversion} indicates that any trajectory-level DP mechanism simultaneously satisfies count-level DP. Therefore, we only need to guarantee trajectory-level DP with a tighter privacy budget. The detailed design of privacy-preserving mechanisms will be discussed in the next subsection.

\subsubsection{Privacy-preserving mechanisms}\label{ssc: Privacy-preserving mechanism}

Building on the privacy notions defined in the previous section, we develop a privacy-preserving mechanism that enables the MP to generate and share FoQ trajectory points with the MA. 

\begin{definition}[Privacy-preserving mechanism $G_{\epsilon_k}$]\label{dfn:ppm}
The privacy-preserving mechanism $G_{\epsilon_k}$ for each data owner $k$ is a mapping that transforms the original FoQ trajectory points $\bm{y}_k = \big\{\left(h_{kn}^{m},\, t_{kn}^{m}\right)\big\}_{n \in \mathcal{N}_k, m \in \mathcal{M}}$ into a distorted version $\hat{\bm{y}}_k = \big\{\left(\hat{h}_{kn}^{m},\, \hat{t}_{kn}^{m}\right)\big\}_{n \in \hat{\mathcal{N}}_k, m \in \mathcal{M}}$ to protect the trajectory level privacy and count level privacy, which can be described as 
\begin{align}
    \hat{\bm{y}}_k  = G_{\epsilon_k}(\bm{y}_k) = R\circ L_{\epsilon_k}\circ f(\bm{y}_k) \label{eq:mechanism}
\end{align}
where the symbol $\circ$ represents the composition of operators. $f(\cdot)$ represents the query function that returns the key statistics to preserve data utility when being used by the MA, as illustrated by Table~\ref{tab:FoQ estimation}. Operator $L_{\epsilon_k}$ represents noise injection to perturb these queries for protecting privacy, and $R$ represents the reconstruction of the shared FoQ trajectory points from the perturbed key statistics. 
\end{definition}

The privacy-preserving mechanism employed in this paper is based on the Gaussian mechanism, as stated in Proposition~\ref{prp:mechanism}. This mechanism adds carefully-designed Gaussian noise to query $f(\cdot)$ based on the sensitivity of the queries to adjacency relations. 

\begin{prop}[Gaussian mechanism] \label{prp:mechanism}
Let $f: \mathcal{X}^n \rightarrow \mathbb{R}^d$ be a query function with global $l_2$ sensitivity $\Delta_f = \sup_{D\sim D'} \left\|f(D)-f\left(D^{\prime}\right)\right\|_2$. For a given data set $D \in \mathcal{X}^n$ and any $\epsilon ,\delta\in(0,1)$, the mechanism that releases 
 $f(D) + \mathcal{N}(0,\sigma^2I_{d\times d})$, where $ \sigma =  \Delta_f \sqrt{2 \log(1.25 / \delta)} / \varepsilon$,  satisfies  $(\varepsilon, \delta)$-DP.
\end{prop}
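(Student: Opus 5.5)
The plan is the classical argument of Dwork and Roth for the Gaussian mechanism. Fix adjacent datasets $D\sim D'$ and set $v=f(D)-f(D')$, so that $\|v\|_2\le\Delta_f$. Write $M(D)=f(D)+Z$ with $Z\sim\mathcal{N}(0,\sigma^2 I_{d\times d})$. Define the privacy loss at output $f(D)+z$ as
\[
\ell(z)=\log\frac{p_{\mathcal{N}(0,\sigma^2 I)}(z)}{p_{\mathcal{N}(0,\sigma^2 I)}(z+v)}=\frac{\|z+v\|_2^2-\|z\|_2^2}{2\sigma^2}=\frac{2\langle z,v\rangle+\|v\|_2^2}{2\sigma^2}.
\]
By rotational invariance of the isotropic Gaussian, $\langle z,v\rangle\sim\mathcal{N}(0,\sigma^2\|v\|_2^2)$. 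The multivariate problem therefore reduces to a one-dimensional Gaussian tail. The worst case is $\|v\|_2=\Delta_f$, since the loss bound only becomes easier for smaller shifts.

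The main step is to show $\mathbb{P}[\ell(Z)>\varepsilon]\le\delta$. This is equivalent to requiring that a standard normal $X$ satisfies
\[
\mathbb{P}\Big[X>\tfrac{\sigma\varepsilon}{\Delta_f}-\tfrac{\Delta_f}{2\sigma}\Big]\le\delta.
\]
Put $c=\sqrt{2\log(1.25/\delta)}$, so that $\sigma\varepsilon/\Delta_f=c$. Then it suffices to show $\mathbb{P}[X>c-\varepsilon/(2c)]\le\delta$.

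Use the Mills-ratio bound $\mathbb{P}[X>t]\le \frac{1}{t\sqrt{2\pi}}e^{-t^2/2}$. Taking logarithms, the target becomes
\[
\log t+\tfrac{t^2}{2}\ \ge\ \log\tfrac{1}{\delta}+\log\tfrac{1}{\sqrt{2\pi}}, \qquad t=c-\tfrac{\varepsilon}{2c}.
\]
Substitute $t^2\ge c^2-\varepsilon$ and use $\varepsilon<1$ together with $\delta<1$. Since $c^2=2\log(1.25/\delta)\ge 2\log 1.25$, we get $c\ge\sqrt{2\log 1.25}\approx 0.668$. We need to check that $t$ stays bounded below (for instance $t\ge c-\tfrac{1}{2c}$) and that the slack $\log 1.25$ absorbs the remaining terms, namely $-\varepsilon/2$, $\log t$, and $\tfrac12\log 2\pi$.

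\textbf{Main obstacle.} This arithmetic verification, that the constant $1.25$ suffices uniformly over $\varepsilon,\delta\in(0,1)$, is the step I expect to be hardest. If $t$ is small when $\delta$ is close to $1$, the Mills bound is weak. My first fallback would be $\mathbb{P}[X>t]\le\tfrac12 e^{-t^2/2}$ in that regime. Failing that, I would accept the standard result and cite Dwork and Roth's Theorem A.1.

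Finally, convert the tail bound into $(\varepsilon,\delta)$-DP. Partition $\mathbb{R}^d$ into the good set $\mathcal{G}=\{z:\ell(z)\le\varepsilon\}$ and its complement. For any event $E$,
\[
\mathbb{P}[M(D)\in E]\le \mathbb{P}[f(D)+Z\in E,\,Z\in\mathcal{G}]+\mathbb{P}[Z\notin\mathcal{G}].
\]
On $\mathcal{G}$ the densities satisfy $p_D\le e^\varepsilon p_{D'}$ pointwise, so the first term is at most $e^\varepsilon\mathbb{P}[M(D')\in E]$. The second term is at most $\delta$ by the tail bound. Because adjacency is symmetric, this gives the claim for both orderings of $D$ and $D'$.
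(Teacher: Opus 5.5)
Your proposal is correct, and it is the standard privacy-loss argument for the Gaussian mechanism. The paper gives no proof of its own and simply defers to Dwork et al., so you are reconstructing the argument it relies on. Your one-sided reduction is sound: only the event $\{\ell(Z)>\varepsilon\}$ matters for the $(\varepsilon,\delta)$ conversion, and $\|v\|_2=\Delta_f$ is the worst case. It is also slightly sharper than the two-sided ($|\ell|$, $\delta/2$) version in Dwork--Roth's appendix.

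The step you flagged is a genuine weakness of the Mills route, but your own fallback closes it.

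\emph{Mills route.} Along your chain, the requirement reduces to $t\ge e^{\varepsilon/2}/(1.25\sqrt{2\pi})\approx 0.53$. The lower bound $t\ge c-\tfrac{1}{2c}$ guarantees this only for $\delta\lesssim 0.74$, so your worry is justified.

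\emph{Case $t\ge 0$.} Use the bound $\mathbb{P}[X>t]\le\tfrac12 e^{-t^2/2}$, together with $t^2\ge c^2-\varepsilon$ and $e^{-c^2/2}=\delta/1.25$. This gives
\[
\mathbb{P}[X>t]\le \tfrac12 e^{\varepsilon/2}e^{-c^2/2}=\frac{e^{\varepsilon/2}}{2.5}\,\delta<0.66\,\delta
\]
uniformly in $\varepsilon,\delta\in(0,1)$, so the Mills bound is not needed at all.

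\emph{Case $t<0$.} The bound above is false here, so this case must be handled separately. Here $t<0$ forces $c^2<\varepsilon/2<\tfrac12$, i.e.\ $\delta>1.25e^{-1/4}\approx 0.97$. Since $c>\sqrt{2\log 1.25}\approx 0.668$, you get $-t<\tfrac{1}{2c}-c<0.081$, hence $\mathbb{P}[X>t]<0.54<\delta$.

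Prefer this direct computation to citing Dwork--Roth's Theorem A.1. That theorem is stated for $c^2>2\ln(1.25/\delta)$ strictly, whereas the proposition sets $\sigma$ with equality.
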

The proof of Proposition~\ref{prp:mechanism} can be found in \cite{dwork2006differential}. The Gaussian mechanism achieves $(\varepsilon, \delta)$-DP by adding Gaussian noise scaled by the sensitivity of the query function. We next show how the Gaussian mechanism can be applied by the MP to generate privacy-preserving FoQ trajectory points, following the procedure introduced in Eq.~\eqref{eq:mechanism}. To avoid being repetitive, the following discussion will focus on Case 1 in Table~\ref{tab:FoQ estimation}, as the mechanisms for the other cases are similar.

\vspace{0.2em}\noindent \textbf{Query function $f(\cdot)$}. The first step involves carefully designing the query function $f(\cdot)$ to preserve key statistics relevant to the data generation task. 
Inspired by the calculation in Table~\ref{tab:FoQ estimation}, we identify four key variables that MP $k$ is interested in preserving, including $\displaystyle \Lambda_{mk}^{\text{T}} = \sum_{n=1}^N (t_{kn}^{m})^2$, $\displaystyle \Lambda_{mk}^{\text{TH}} = \sum_{n=1}^N t_{kn}^{m}h_{kn}^{m}$,  $ \displaystyle\Lambda_{mk}^{\text{H}} = \sum_{n=1}^N (h_{kn}^{m})^2$, and $N$. In other words, we can define $f(D) = [\rho^{\text{T}}\Lambda_{mk}^{\text{T}},\rho^{\text{TH}}\Lambda_{mk}^{\text{TH}},\rho^{\text{H}}\Lambda_{mk}^{\text{H}},\rho^{\text{N}}N]$, where $\bm{\rho} = [\rho^{\text{T}},\rho^{\text{TH}}, \rho^{\text{H}}, \rho^{\text{N}}]$ is publicly known (to MA and MP) weighting parameters to balance their contribution in the sensitivity. 

\noindent\vspace{0.2em} \textbf{Gaussian mechanism $L_k(\cdot)$}. To design the Gaussian mechanism, we first calculate the sensitivity of including an individual data point on the query outcome. Let $t_{\max}^m$ and $h_{\max}^m$ represent the upper bounds for $t_{kn}^m$ and $h_{kn}^m$, respectively. Then, the sensitivity for the each query variable can be calculated as
\begin{align}
    \Delta_{TH} = t_{\max}^mh_{\max}^m,\quad \Delta_{T} = (t_{\max}^m)^2,\quad \Delta_{H} = (h_{\max}^m)^2, \quad \Delta_N = 1
\end{align}
Then, we can calculate the sensitivity $\Delta_f = \sqrt{\Delta_{TH}^2\rho_{TH}^2 + \Delta_{T}^2\rho_{T}^2 + \Delta_{H}^2\rho_{H}^2+ \Delta_{N}^2\rho_{N}^2}$ and apply Proposition~\ref{prp:mechanism} to generate Guassian noise with $\sigma_f = \Delta_f\sqrt{2 \log(1.25 / \delta)} / \varepsilon$. 

Let $\sigma_{l}^2 = \frac{\rho_{l}^2}{\rho_{TH}^2 +\rho_{T}^2 + \rho_{H}^2+ \rho_{N}^2},~l \in \{\text{N}, \text{T}, \text{TH}, \text{H}\}$ be the share of variance on each query component. Then, we implement the Gaussian mechanism on these query components by adding the following noise. 
\begin{align}
\tilde{N}=N+\eta_{\text{N}}, \quad \eta_{\text{N}} \sim \mathcal{N}\left(0, \sigma_{\text{N}}^2\right), \quad \tilde{\Lambda}_{mk}^{\text{T}}=\Lambda_{mk}^{\text{T}}+\eta_{\text{T}}, \quad \eta_{\text{T}} \sim \mathcal{N}\left(0, \sigma_{\text{T}}^2\right) \\
\tilde{\Lambda}_{mk}^{\text{TH}}=\Lambda_{mk}^{\text{TH}}+\eta_{\text{TH}}, \quad \eta_H \sim \mathcal{N}\left(0, \sigma_{\text{TH}}^2\right),  \quad 
\tilde{\Lambda}_{mk}^{\text{H}}=\Lambda_{mk}^{\text{H}}+\eta_{\textrm{H}}, \quad \eta_n \sim \mathcal{N}\left(0, \sigma_{\text{TH}}^2\right)
\end{align}

\noindent \textbf{Reconstruction of FoQ points $R(\cdot)$}. 
With the perturbed statistics, we can then generate the privacy-preserving synthetic FoQ trajectory points with $h=\tilde{\psi}_{m, k}t+\xi_{m,k},~\xi_{m,k}\sim \mathcal{N}(0,\tilde{\sigma}_{m,k}^2)$, where $\tilde{\psi}_{m, k}$ and $\tilde{\sigma}_{m,k}^2$ are written as
\begin{align}
    \tilde{\psi}_{m, k}&=\frac{\tilde{\Lambda}_{m k}^{\text{TH}}}{\tilde{\Lambda}_{m k}^{\mathrm{T}}} \approx \hat{\psi}_{m, k} + \frac{1}{\hat{\Lambda}_{m k}^{\mathrm{T}}}\left(\eta_{TH} - \hat{\psi}_{m, k}\eta_{T}\right) \\ 
    \tilde{\sigma}_{m, k}& = \frac{1}{\tilde{N}-1}\left(\tilde{\Lambda}_{mk}^{\text{H}} - \frac{(\tilde{\Lambda}_{m k}^{\text{TH}})^2}{\tilde{\Lambda}_{m k}^{\mathrm{T}}}\right)
\end{align}
where the approximation is made when the sample size $N$ is large. We can further analyze the distribution of the FoQ slope as 
\begin{align}
     \tilde{\psi}_{m, k} \sim \mathcal{N}\left(\hat{\psi}_{m, k}, \frac{1}{({\hat{\Lambda}_{m k}^{\mathrm{T}}})^2}\left({\sigma}^2_{\text{TH}}+\hat{\psi}_{m, k}^2 \sigma_{\text{T}}^2\right)\right) \label{eq:beta_dist}
\end{align}
Since $\hat{\Lambda}_{m k}^{\mathrm{T}}$ grows quadratically with the sample size $N$, the impact of the DP noise on the accuracy of $\psi$ diminishes as $N$ increases.

To this end, this privacy-preserving mechanism provides both trajectory-level and count-level privacy guarantees for all data owners.

\subsection{Sampling-based method to determining utility functions}\label{sec: Sampling}
To summarize, we can write follower and leader problems for collaborative traffic signal optimization as follows. 

\noindent (i) \emph{Follower utility maximization}. Each follower solves the same optimization problem as \eqref{eq: general-UO-obj} to \eqref{eq: general-UO-cons2}. Notice that in the $(\epsilon,\delta)$-DP mechanism, we assume that $\delta$ is a given small value and not counted as a decision variable. $\epsilon_k$ refers to the privacy budget defined in Definitions~\ref{dfn:dp_t} and \ref{dfn:dp_c}. These new specifications will not change Assumption \ref{asm:regularity}.

\noindent (ii) \emph{Leader utility maximization}. The leader's utility maximization can be written as the following optimization problem:
\begin{align}
    \max_{\bm{d}}\quad & U_1\left(\bm{d};\bm{a}, \bm{\epsilon} \right)=\mathbb{E}_{\bm{\eta}}\left[W_1\left(C^*, \bm{g}^*, \bm{o}^*, \bm{q}, \bm{\eta}\right)\right], \label{eq-UR-obj}\\
    \mathrm{s.t.}\quad&
    \tilde{\bm{Y}}(\bm{\eta})=\{G_{\epsilon_{k}}(\bm{y}_k, \bm{\eta}_k)\}_{k\in\mathcal{K}_O: a_{k} = 1},\label{eq-UR-cons1}\\&
   \tilde{\bm{q}}(\bm{\eta}) = B(\tilde{\bm{Y}}(\bm{\eta}_k)),
\label{eq-UR-cons2}\\&
    C^*(\bm{\eta}), \bm{g}^*(\bm{\eta}), \bm{o}^*(\bm{\eta}) = \pi(\tilde{\bm{q}}(\bm{\eta})), \label{eq-UR-cons3}
\end{align}
where the objective function \eqref{eq-UR-obj} includes the transport-related welfare benefit function $W_1\left(\cdot\right)$ that depends on the optimal solution $(C^*, \bm{g}^*, \bm{o}^*)$, with $\bm{\eta}_k$ representing the added DP noises by data owner $k$. To obtain the optimal solution, MA first estimates the arrival flow of all movements by Constraint~\eqref{eq-UR-cons2} based on the FoQ trajectory points ($\hat{\bm{Y}}$) among all stakeholders, which can be calculated using Constraint~\eqref{eq-UR-cons1}. Then, the optimal solution can be obtained by solving a sub-optimization problem using Constraint~\eqref{eq-UR-cons3}, which is an MILP fixed-time traffic signal control problem defined in optimization problem~\eqref{eq:SC}.

\noindent \textbf{Determination of the utility functions.} It can be challenging in reality to obtain the specific functional forms in the aforementioned Stackelberg game. Specifically, determining the utility functions of MA and MPs poses two challenges. First, MA does not observe the true traffic demand and only receives perturbed (noisy) data from MPs, making it challenging to compute the transport-related welfare. Second, the MPs' data-sharing strategies are governed by probabilistic mechanisms (e.g., differential privacy), introducing uncertainties to the evaluation of expected utility. 

To address these two challenges, we apply the following techniques: 

\noindent (i) \emph{True demand sampling:}
We model the relationship between the true traffic demand and the perturbed observations shared by the MP. Specifically, we construct a probabilistic mapping from the perturbed shared observations to the underlying true demand (see Eq. \eqref{eq:beta_dist}).  
This allows the MA to sample plausible realizations of the true traffic demand conditioned on the received (noisy) data. Based on these sampled realizations, the MA can then estimate the expected transport-related welfare and quantify how it varies with different privacy levels.

\noindent (ii) \emph{Leader's information briefing:} Based on the utility function defined in the previous section, MA can publish a reference utility function, which quantifies the expected transport-related welfare under different data collaboration conditions (i.e., the condition can be described as a true traffic demand pair with a perturbed traffic demand).
This function can be further represented in a tabular format, termed a data-utility map, indicating the potential average delay that the MA would obtain and could help communicate to each MP, given varying combinations of true and perturbed traffic demand.
\begin{align}
    W_0 = f(\hat{q}_m, q_m), \quad i \in \mathcal{I}, p \in \mathcal{P}_i, m \in \mathcal{M}_p
\end{align}
This construction is consistent with the definitions introduced earlier, as the underlying welfare function for both MA and MPs remains unchanged in the studied data collaboration setting. It should be noted that the values of true and estimated traffic demand used in this table are obtained via a grid search rather than through actual gameplay. As such, the resulting function or table serves as a benchmark utility map that each MP can use to assess whether sharing a particular subset of data would lead to meaningful improvements in transport efficiency.

\section{Numerical experiments}\label{sec: Numerical experiments}
We perform a numerical case study based on real-world data in this section, considering the collaboration between a single data requester (MA) and two data owners (MPs).
This section aims to provide numerical insights to validate the assumptions, analytical results, and equilibrium behavior of the proposed game.
Specifically, Section \ref{ssec: Experiment settings} introduces the basic experiment settings, including network layout and traffic demand. Section \ref{ssec: Results analysis} analyzes the results and provides actionable insights in promoting data collaboration.

\subsection{Experiment settings}\label{ssec: Experiment settings}
\noindent\textbf{Collaboration settings}: We consider a B2G digital transportation marketplace with three stakeholders, whereby one data requester (MA) intends to request data from two data owners (i.e., two active MPs, each operating a vehicle fleet with a penetration rate of 20\%) to update the fixed-time signal timing plan within an urban arterial network.

\begin{figure}[ht]
    \centering
    \includegraphics[width=0.88\textwidth]{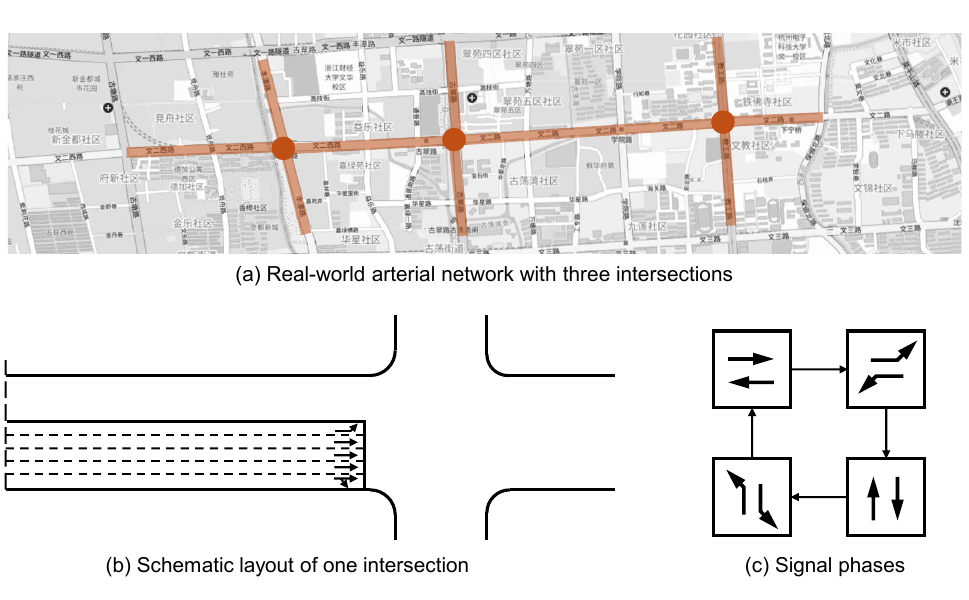}
    \caption{Illustration of the urban arterial network with three intersections.}
    \label{fig-sim}
\end{figure}

\noindent\textbf{Network and demand settings}: We conduct the experiments on a real-world case in Hangzhou, China. The layout, lane configuration, and phase sequence of the urban arterial network with three intersections are shown in Figure \ref{fig-sim}. The lengths of each segment are 922, 989, 1552, and 550 meters, respectively. We simulate the FoQs according to case 1 as introduced in Section \ref{ssec-str-ma}. We set the demand of outbound (i.e., from east to west) through traffic to 3400 veh/h, while 1800 veh/h is set for the inbound (i.e., from west to east) volume. Other left-turning and through traffic along the side street is all configured as 200 veh/h. Under this demand setting, the v/s (volume/saturation flow) ratio for each intersection is 0.93. The traffic demand data is configured to represent realistic peak-hour demand levels, based on traffic demand data collected and processed from traffic camera observations~\citep{wei2019colight, zhu2024coordination, trafficdata}.

\noindent\textbf{Privacy settings}:
The data quality determined by the MA directly corresponds to a set of privacy parameters $\epsilon$ selected by the MPs. Therefore, we assume that the MA can specify the desired data quality based on the scale of $\epsilon$ associated with each data owner. $\beta$ is set to $90$ for each MP.

\begin{figure}[htbp]
     \centering
     \begin{subfigure}[b]{0.49\textwidth}
         \centering
         \includegraphics[width=\textwidth]{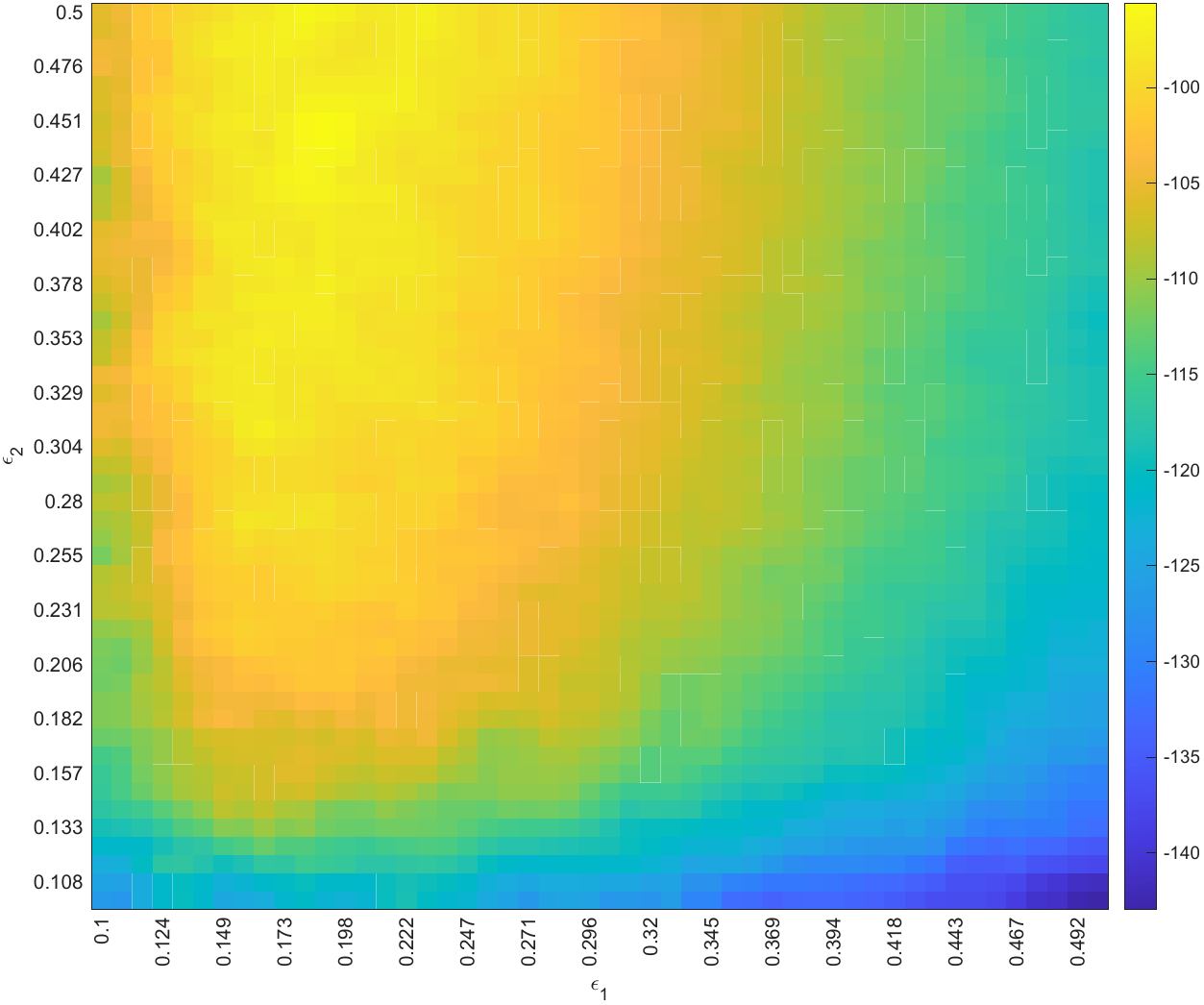}
         \caption{The utility of MP 1.}
         \label{fig:u without a 1}
     \end{subfigure}
     \begin{subfigure}[b]{0.49\textwidth}
         \centering
         \includegraphics[width=\textwidth]{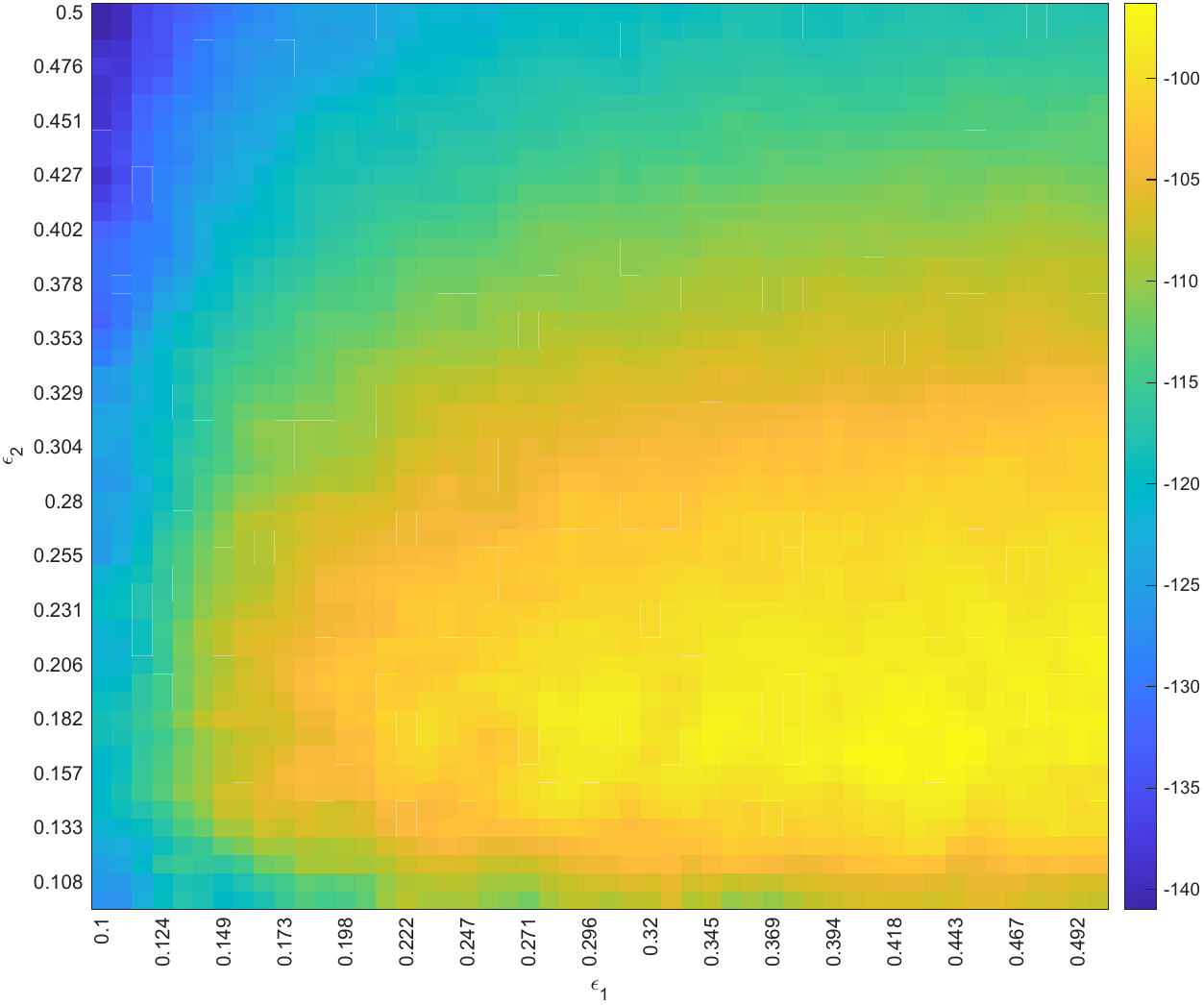}
         \caption{The utility of MP 2.}
         \label{fig:u without a 2}
     \end{subfigure}
     \begin{subfigure}[b]{0.49\textwidth}
         \centering
         \includegraphics[width=\textwidth]{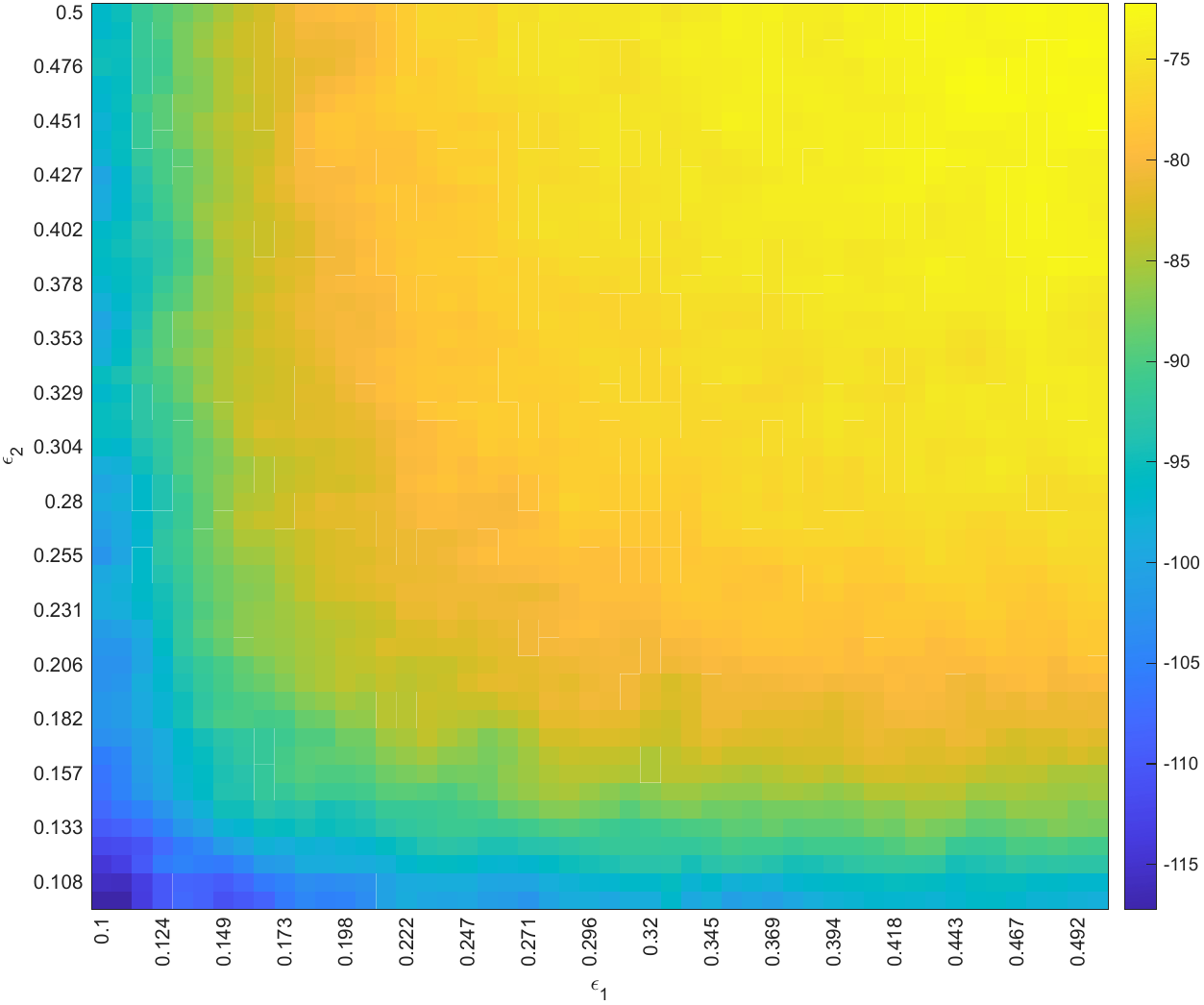}
         \caption{The utility of MA.}
         \label{fig:u without a 3}
     \end{subfigure}
     \caption{The utility of MP 1 and MP 2 with respect to different privacy parameters $\epsilon_1$ and $\epsilon_2$ chosen by MP 1 and MP 2, respectively.}
    \label{fig:u without a}
\end{figure}
\begin{figure}[htbp]
     \centering
     \begin{subfigure}[b]{0.49\textwidth}
         \centering
         \includegraphics[width=\textwidth]{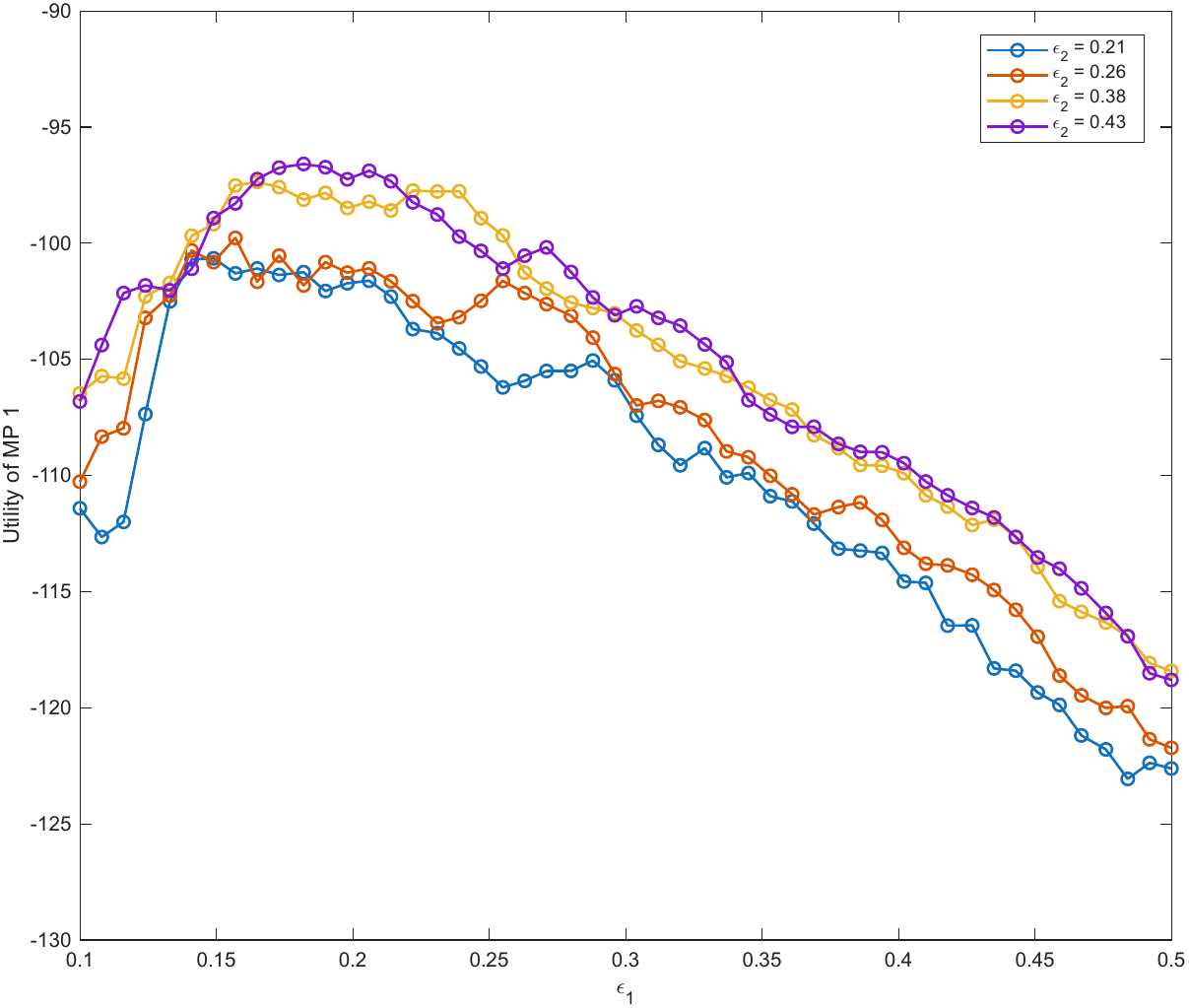}
         \caption{MP 1's utility versus $\epsilon_1$.}
         \label{fig:u-mp1}
     \end{subfigure}
     \begin{subfigure}[b]{0.49\textwidth}
         \centering
         \includegraphics[width=\textwidth]{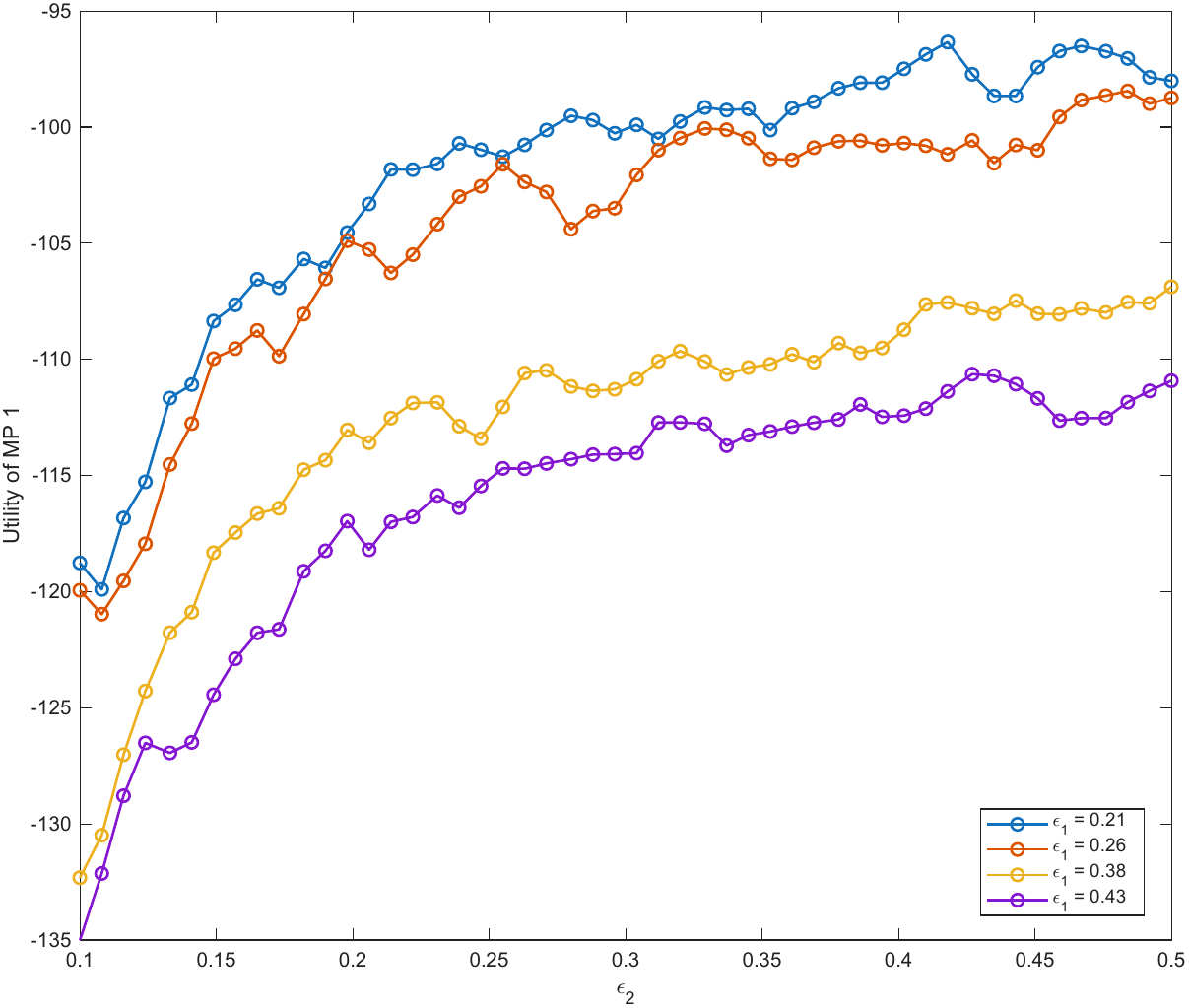}
         \caption{MP 1's utility versus $\epsilon_2$.}
         \label{fig:u-mp2}
     \end{subfigure}
     \caption{The changes in MP 1's utility when one MP's privacy parameter is fixed.}
    \label{fig:u-mp}
\end{figure}

\subsection{Results analysis}\label{ssec: Results analysis}
In this Stackelberg game, the key is to identify the followers’ (MPs’) best responses to the leader’s (MA’s) strategy. We analyze the numerical results from two perspectives. First, we explore the entire strategy space of all MPs to characterize the properties of the utility functions. Second, we calculate all possible best responses of the MPs, which provides insights for determining the leader’s (MA’s) best strategy (i.e., the equilibrium).

Figure \ref{fig:u without a} illustrates the utilities of the MA and MPs as functions of the privacy-preserving parameters $\epsilon_1$ and $\epsilon_2$, chosen by MP 1 and MP 2, respectively. This figure shows how the utilities of all stakeholders, MA and MPs, vary with different combinations of privacy levels selected by the multiple MPs, conditioned on any given pair of data quality parameters $(d_1, d_2)$ chosen by the MA. From the numerical results, we observe the following properties:

\noindent \textbf{More accurate data sharing improves the MA's utility}. As shown in Figure \ref{fig:u without a 3}, the MA’s utility increases monotonically with both $\epsilon_1$ and $\epsilon_2$. The MA’s utility reflects only the expected transport-related welfare, which is obtained from the optimal solution based on the estimated traffic demand. Since a larger $\epsilon$ implies less noise in the shared data, the improvement in estimation accuracy leads directly to higher welfare. This is consistent with the intuition that the MA benefits more from higher-quality data. Correspondingly, the MA heatmap exhibits a warmer color gradient as both $\epsilon_1$ and $\epsilon_2$ increase, indicating higher utility.

\noindent \textbf{MPs exhibit diminishing marginal utility with respect to other MPs’ privacy levels}. As shown in Figures \ref{fig:u without a 1} and \ref{fig:u without a 2}, the utility of each MP increases with the other MP’s privacy parameter $\epsilon$, indicating that more accurate data shared by one participant improves the utility of the others. To further illustrate this effect, Figures \ref{fig:u-mp1} and \ref{fig:u-mp2} plot MP 1’s utility with respect to $\epsilon_1$ and $\epsilon_2$ under various fixed values of the counterpart’s privacy parameter. We observe that, although MP 1’s utility rises as MP 2’s $\epsilon_2$ increases, the rate of improvement decreases progressively. This pattern reflects a diminishing marginal benefit from the increased data accuracy of other participants, which is consistent with Assumption~\ref{asm:marginal}. In other words, while better data from peers always benefits an MP, each additional increment in $\epsilon_j$ yields a smaller improvement in utility compared to the previous increment.

\begin{figure}[htbp]
     \centering
     \begin{subfigure}[b]{0.49\textwidth}
         \centering
         \includegraphics[width=\textwidth]{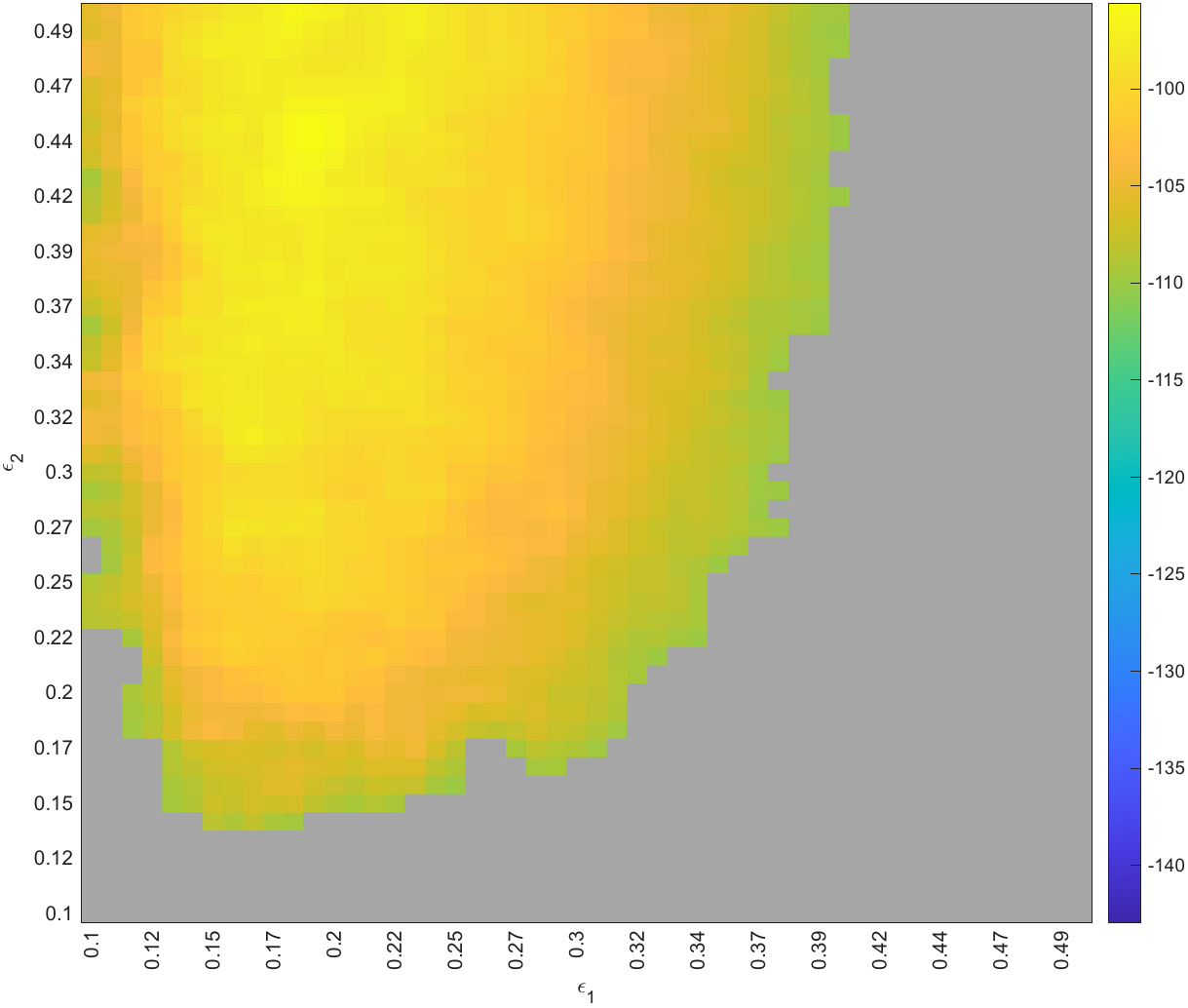}
         \caption{The utility of MP 1.}
         \label{fig:u with a 1}
     \end{subfigure}
     \begin{subfigure}[b]{0.49\textwidth}
         \centering
         \includegraphics[width=\textwidth]{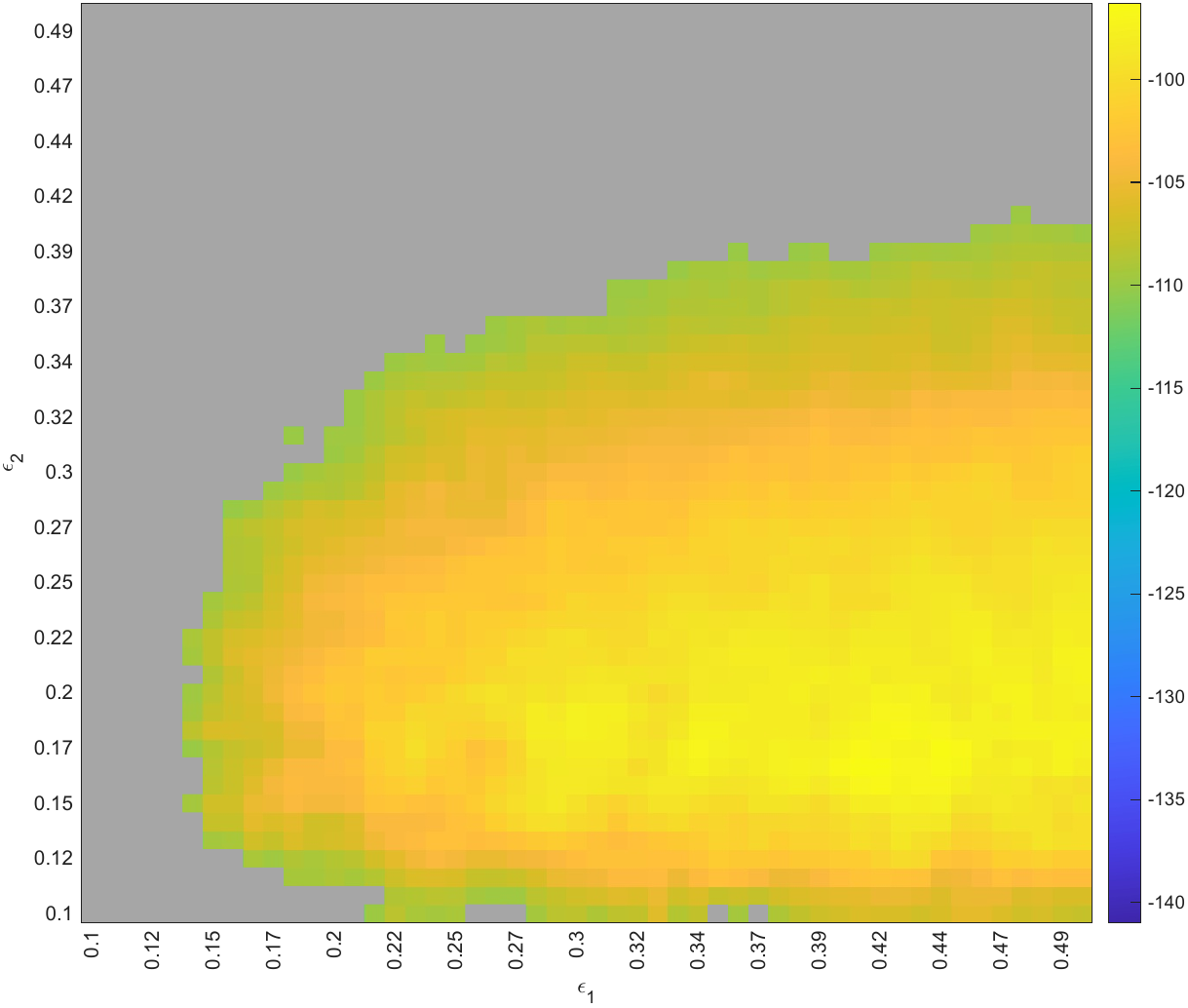}
         \caption{The utility of MP 2.}
         \label{fig:u with a 2}
     \end{subfigure}
     \begin{subfigure}[b]{0.49\textwidth}
         \centering
         \includegraphics[width=\textwidth]{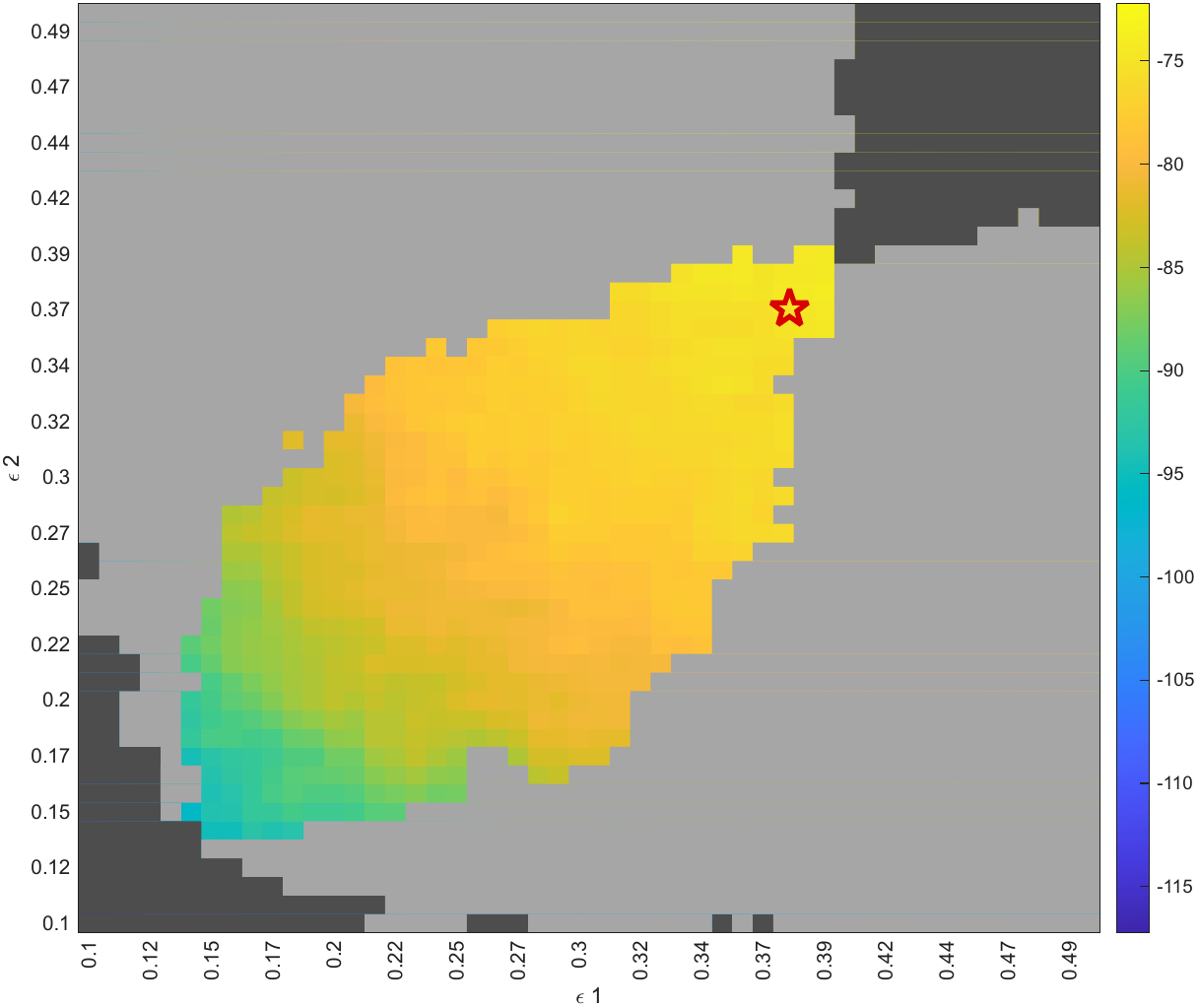}
         \caption{The utility of MA.}
         \label{fig:u with a 3}
     \end{subfigure}
     \caption{The utility of MP 1 and MP 2 with respect to different privacy parameters $\epsilon_1$ and $\epsilon_2$ chosen by MP 1 and MP 2, respectively. Dark gray regions indicate no data sharing. Gray regions indicate that at least one MP shares data. The remaining regions correspond to both MPs sharing data. The red pentagram marks the equilibrium of the data collaboration game.}
    \label{fig:u with a}
\end{figure}

Figure \ref{fig:u with a} presents the utilities of the MA and MPs, where we examine whether an MP has sufficient incentive to share its data. For all subplots in Figure \ref{fig:u with a}, the grey areas mean that there exists at least one MP that is not willing to share the data, where its transport-related benefits obtained from collaboration cannot outweigh the associated privacy costs.
Two key findings are summarized as follows:

\textbf{Excessively high data quality expectations (i.e., larger $d$) can lead to the failure of data collaboration}. 
As shown in Figures~\ref{fig:u with a 1} and~\ref{fig:u with a 2}, a larger $\epsilon$ for each MP results in a lower utility from sharing data, thereby discouraging the MP from participating in the collaboration.
We then plot, in Figure~\ref{fig:u with a 3}, the intersection between the regions where at least one MP is unwilling to share data and the heatmap of the MA’s utility. In the figure, the light gray areas indicate scenarios where one MP chooses not to share, while the dark gray areas indicate scenarios where both MPs choose not to share. Then, we can evaluate the best response of these two MPs based on different $d$ chosen by the MA.
Specifically, when the MA sets a high value for $d$ (i.e., higher data quality requirement), the MP has to choose from a subset of $\epsilon$ with higher values, which incurs a significant privacy cost and results in a utility far below the utility of not sharing data. Consequently, the MP opts not to share data, leaving MA unable to achieve any reward due to the lack of data collaboration.

\textbf{Moderate or low data quality expectations (lower $d$) can promote data collaboration}. 
When MA sets a moderate or lower $d$, the MP's utility for sharing data becomes higher than its utility for not sharing, incentivizing data collaboration (shown in the yellow region in Figure \ref{fig:u with a 3}). Although the data shared by the MP is distorted, the MA receives sufficient data to achieve improved utility. And the privacy cost is relatively acceptable for both MPs. 
The equilibrium in this case lies near the red cross in Figure~\ref{fig:u with a 3}, where the leader (MA) achieves its maximum utility given the MPs’ best responses. In summary, by appropriately balancing data quality requirements with the MPs’ privacy-preserving strategies, the MA can ensure mutually beneficial outcomes for both parties in the game.

\section{Conclusion}\label{sec: conclusion}
In this work, we study the impact of privacy-preserving mechanisms on data collaboration in a B2G digital transportation marketplace. This work is among the first to study such a problem among multiple stakeholders, including MA and MPs. We first introduce a general model that describes our problem as a Stackelberg game, where a data requester (the leader/MA) selects a strategy to allocate its data requests/expectations, to which the data owners (the followers/MPs) respond by deciding whether to engage in data collaboration with the adoption of a perturbation-based privacy-preserving mechanism. This general model captures the complex interplay among privacy protection, data quality, and stakeholder utilities. Then, we specify the model for a transportation-specific application, collaborative traffic signal optimization, whereby the MA leverages the trajectory data provided by multiple MPs to update its fixed-time traffic signal timing plans. We model the privacy-preserving trajectory data sharing by designing a differentially private traffic demand estimation mechanism that integrates DP with LWR theory.

Next, we present equilibrium analysis and numerical experiments based on a real-world case study. Under an idealized setting, we introduce several assumptions and establish the existence of an equilibrium, along with a sufficient condition that guarantees successful data collaboration. Our analysis shows that moderate or low data-quality expectations by the MA (i.e., smaller $d$) can promote data collaboration, whereas excessively high data-quality expectations (i.e., larger $d$) may lead to collaboration failure.
Under a more realistic setting, we develop a solution method and validate the model using a real-world case study. The numerical results verify the assumptions and theoretical insights derived in the idealized analysis. Overall, the outcome of this work provides actionable insights into how and when privacy-preserving mechanisms can facilitate data collaboration in digital transportation marketplaces. The findings are expected to inform policymakers in designing effective strategies to bridge data silos and foster sustainable data collaboration.

There are several future research directions. First, it would be interesting to investigate mechanism designs (e.g., monetary incentives) that enable the MA to encourage broader participation from MPs. 
Second, we are interested in generalizing the proposed Stackelberg framework to more general settings, such as analyzing mixed-strategy equilibria with a large population of followers~\citep{du2014distributed}, and considering incomplete information between the MA and MPs within a Bayesian game framework~\citep{liu2022efficient}.
Third, we would like to extend the current game-theoretic framework to a fully-connected data collaboration topology among multiple MPs. In this topology, the interactions among stakeholders are more complex as the collaboration can influence MPs' operational decisions (e.g., routing choices). 
Fourth, it is worthwhile to extend this framework to other transportation applications beyond traffic control.

\section*{Acknowledgement}
This research was supported by the Singapore Ministry of Education (MOE) under its Academic Research Fund Tier 2 (A-8003064-00-00). This article solely reflects the opinions and conclusions of its authors and not Singapore MOE or any other entity.

\appendix

\section{Proof of Theorem~\ref{thm:conversion}}

Let us first prove a lemma on the conversion from count-level adjacency to trajectory-level adjacency: 

\begin{lemma}[Relation between trajectory-level and count-level adjacency]
\label{lem: sequence between two adj datasets}
For any count-level adjacent datasets $D$ and $D'$, there exists a sequence $D_0\sim_t D_1\sim_t\!\cdots\!\sim_t D_b$ where every two consecutive datasets are trajectory-level adjacent, $D=D_0$ and $D'=D_b$.
\end{lemma}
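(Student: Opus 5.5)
The plan is a direct construction that removes (or adds) the $b$ differing entries one at a time. By Definition~\ref{dfn:count_ad}, count-level adjacency of $D$ and $D'$ means one of two things. Either $D\subset D'$ with $|D|=|D'|-b$, or $D'\subset D$ with $|D'|=|D|-b$. These two cases are symmetric, since reversing a chain of trajectory-level adjacent datasets gives another such chain and Definition~\ref{dfn:traj_ad} is symmetric. So I will treat the case $D'\subset D$ in detail and obtain the other case by relabeling and reversing the sequence.

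In the case $D'\subset D$, write $D\setminus D' = \{\tau_{1},\dots,\tau_{b}\}$. This difference has exactly $b$ entries because $D'\subset D$ and $|D|-|D'|=b$. If datasets are multisets, I will count multiplicities, and the same argument goes through. Then define
\begin{align}
D_0 = D,\qquad D_j = D\setminus\{\tau_{1},\dots,\tau_{j}\},\quad j=1,\dots,b .
\end{align}
By construction $D_b = D\setminus(D\setminus D') = D'$, using $D'\subset D$. For each $j\ge 1$ we have $D_j\subset D_{j-1}$ and $|D_j| = |D_{j-1}|-1$, because $\tau_j\in D_{j-1}$ (it has not yet been removed). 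By Definition~\ref{dfn:traj_ad}, this means $D_{j-1}\sim_{\text{traj}} D_j$. The sequence $D_0\sim_t D_1\sim_t\cdots\sim_t D_b$ therefore has the required form.

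In the case $D\subset D'$, apply the construction above with the roles of $D$ and $D'$ swapped. This gives a chain from $D'$ to $D$. Reversing its order gives a chain from $D$ to $D'$, and every consecutive pair in it is still trajectory-level adjacent.

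There is no real obstacle here. The only point needing care is bookkeeping: checking that each removed element is still present at the step where it is removed, so that each cardinality drops by exactly one. I will also note that $b=0$ gives the trivial sequence $D_0=D=D'$, which is consistent with how the lemma will later be used in the group-privacy argument for Theorem~\ref{thm:conversion}.
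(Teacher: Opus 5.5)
Your proposal is correct and is essentially the paper's proof. The paper also reduces to the case $D'\subset D$ by symmetry and removes the $b$ elements of $D\setminus D'$ one at a time; it picks each $\tau_i$ from $D_{i-1}\setminus D'$ as it goes rather than fixing an enumeration in advance, which is equivalent, and your explicit reversal for the case $D\subset D'$ and your check that each $\tau_j$ is still present are just slightly more careful bookkeeping.
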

\begin{proof}
Without loss of generality, let us assume $D'\subset D$ and $|D \backslash D'| = b$ (the other case is symmetric). 
Then, we can build a sequence
\begin{align}
    D_0 = D,\quad D_i = D_{i-1}\backslash\{\tau_i\},\quad i=1,2,\dots,b,
\end{align}
where each $\tau_i$ is chosen arbitrarily from $D_{i-1}\backslash D'$.

With Definition~\ref{dfn:traj_ad}, we have 
\begin{align}
    D_{i-1}\sim_{\text{traj}} D_i, \quad i=1,2,\dots,b,
\end{align}
As $D' \subset D$, we also have $D_b=D\backslash (D\backslash D') = D'$. Hence
\begin{align}
    D = D_0 \sim_{\text{traj}} D_1 \sim_{\text{traj}} \cdots \sim_{\text{traj}} D_b = D'
\end{align}
This completes the proof. 
\end{proof}

We next prove Theorem~\ref{thm:conversion}. 
\begin{proof}[Proof of Theorem~\ref{thm:conversion}]
Let us denote any count-level adjacent datasets $D$ and $D'$ that satisfy the relationship $D\sim_{\text{traj}} D'$ by Definition \ref{dfn:count_ad}. Hence, there exists a chain composed of $b$ trajectory-level adjacent datasets as follows by Lemma \ref{lem: sequence between two adj datasets}.
\begin{align}
w    D = D_0 \sim_{\text{traj}} D_1 \sim_{\text{traj}} \cdots \sim_{\text{traj}} D_b = D' \label{eq:proof1}
\end{align}

Because $M$ is $(\epsilon,\delta)$-DP for every trajectory-level adjacent datasets and any event $E\in \mathcal{X}$, we have
\begin{align}
\mathbb{P}\left[M(D_0) \in E\right]
&\leq e^{\epsilon} \mathbb{P}\left[M(D_1) \in E\right] + \delta \\
&\leq e^{2\epsilon} \mathbb{P}\left[M(D_2) \in E\right] + \delta e^{\epsilon} + \delta \\
&\ \ \vdots \notag \\
&\leq e^{b\epsilon} \mathbb{P}\left[M(D_b) \in E\right]  + \delta \sum_{b'=0}^{b-1}e^{b'\epsilon},
\end{align}

As $D_0 = D$ and $D_b = D'$, mechanism $M$ satisfies count-level $(b\epsilon, \delta \sum_{b'=0}^{b-1}e^{b'\epsilon})$-DP.
\end{proof}

\section{Proof of Theorem~\ref{thm:submodular}}\label{app:submodular}

\begin{proof}
By Assumption~\ref{asm:marginal}, each player's best response $BR_k(z_{-k})$ in the lower-stage game is unique and, by the implicit function theorem, the best response 
\begin{align}
    \frac{\partial BR_k(z_{-k})}{\partial z_\ell} = -\frac{\partial^2 \bar{U}_k}{\partial z_\ell\partial z_k}\Big(\frac{\partial^2 \bar{U}_k}{\partial z_k^2}\Big)^{-1} < 0
\end{align}
Therefore, we have $z_k^*(a_k,1_\ell,a_{-k\ell})\leq z_k^*(a_k,0_\ell,a_{-k\ell})$. It is also trivial to get $z_k^*(1,a_{-k})\geq z_k^*(0,a_{-k})$. 

Then, let us define 
\begin{align}
    V_{\Delta,1} = V_k(1,1_\ell,a_{-k\ell})-V_k(1,0_\ell,a_{-k\ell}) \\
    V_{\Delta,0} = V_k(0,1_\ell,a_{-k\ell})-V_k(0,0_\ell,a_{-k\ell})
\end{align}
which can be seen as the integration of $\frac{\partial U_k}{\partial z_\ell} \geq 0$ along the path of $z_{\ell}$ when $a_l$ switches from $0$ to $1$, in the case $a_k=1$ and $a_k=0$, respectively. By Assumption~\ref{asm:marginal}, $\frac{\partial U_k}{\partial z_\ell}(1,a_{-k}) \leq \frac{\partial U_k}{\partial z_\ell}(0,a_{-k})$. As  range of integral $z_k^*(a_k,1_\ell,a_{-k\ell})\leq z_k^*(a_k,0_\ell,a_{-k\ell})$, we can show that $V_{\Delta,1} \leq V_{\Delta,0}$. 
\end{proof}

\section{Counterexample for the existence of pure Nash equilibrium for the upper-stage follower model}\label{sec: Counterexample}
We provide an example of three players who do not have pure NE for the upper-stage follower model with the following form:
\begin{align}
    U_1(x_1, x_2, x_3) &= \log\!\big(x_1 + x_2 + x_3 + b\big)
\;-\; c_1 \, x_1
\;-\; \gamma_{12} \, x_1 x_2
\;-\; \gamma_{13} \, x_1 x_3, \\[4pt]
U_2(x_1, x_2, x_3) &= \log\!\big(x_1 + x_2 + x_3 + b\big)
\;-\; c_2 \, x_2
\;-\; \gamma_{21} \, x_2 x_1
\;-\; \gamma_{23} \, x_2 x_3, \\[4pt]
U_3(x_1, x_2, x_3) &= \log\!\big(x_1 + x_2 + x_3 + b\big)
\;-\; c_3 \, x_3
\;-\; \gamma_{31} \, x_3 x_1
\;-\; \gamma_{32} \, x_3 x_2.
\end{align}
where $U_1$, $U_2$, and $U_3$ represent the utility functions of each follower, $x_1$, $x_2$ and $x_3$ represent the binary decision variable of each follower. $ b = 1.0099,c1 = 0.161, c2 =0.1545, c3 = 0.1102, \gamma_{12} = 1.0622, \gamma_{13} = 0.0979, \gamma_{21} = 0.0521, \gamma_{23} = 0.5048, \gamma_{31}= 0.9145, \gamma_{32} = 0.2694$.

The utilities for each strategy profile is calculated in Table \ref{tab:utility_table}. This example safeties the submodularity in Theorem \ref{thm:submodular}. There does not exist a pure NE.
\begin{table}[htbp]
\centering
\caption{Utilities of all players under all binary strategy profiles}
\label{tab:utility_table}
\begin{tabular}{ccc|ccc}
\toprule
$x_1$ & $x_2$ & $x_3$ & $U_1$ & $U_2$ & $U_3$ \\
\midrule
0 & 0 & 0 & 0.01 & 0.01 & 0.01 \\
0 & 0 & 1 & 0.70 & 0.70 & 0.59 \\
0 & 1 & 0 & 0.70 & 0.54 & 0.70 \\
0 & 1 & 1 & 1.10 & 0.44 & 0.72 \\
1 & 0 & 0 & 0.54 & 0.70 & 0.70 \\
1 & 0 & 1 & 0.84 & 1.10 & 0.08 \\
1 & 1 & 0 & -0.12 & 0.90 & 1.10 \\
1 & 1 & 1 & 0.07 & 0.68 & 0.09 \\
\bottomrule
\end{tabular}
\end{table}

\section{List of important variables}\label{sec: variables}
\begin{longtable}[htbp]{l@{\hspace{6em}}p{13cm}}
\caption{List of important variables} \label{tab: glossary of terms} \\

\toprule
\multicolumn{2}{l}{Sets} \\
\midrule
$\mathcal{K}$ & Set of stakeholders indexed by $k = 1,\cdots, K$ \\
$\mathcal{H}_k^+$ & Set of stakeholders that stakeholder $k$ is considering sharing data to\\
$\mathcal{H}_k^-$ & Set of stakeholders that stakeholder $k$ is considering requesting data from\\
$\mathcal{K}_R$ & Set of data requesters\\
$\mathcal{K}_O$ & Set of data owners\\
$\mathcal{I}$ & Set of intersections indexed by $i = 1,\cdots, I$ \\
$\mathcal{P}_i$ & Set of signal phases of intersection $i$ indexed by $p_i=1,\cdots,P_i$ \\
$\mathcal{M}_p$ & Set of controlled movements of phase $p$\\
$\mathcal{M}$ & Set of all controlled movements in the arterial\\
$\mathcal{N}_k$ & Set of vehicles observed by data owner $k$ within the arterial network during a recent service period \\

\midrule
\multicolumn{2}{l}{Variables of data requester's strategy (MA)} \\
\midrule
$R_1$ & Collection of a data request sent by MA $1$ \\
$F_{\sigma_{k}}$ & Observation function of MA to stakeholder $k$ parameterized by $\sigma_{k}$ describing the mapping from the system states to required observation\\
$\Phi_k$ & Function that measures the distance between shared data and true data\\
$d_{k}$ & A distance threshold \\
$\bm{d}$ & Vector of distance thresholds sent to all data owners by the requester \\
$\bm{y}_k$ & True data \\
$\hat{\bm{y}}_k$ & Shared data with perturbation \\

\midrule
\multicolumn{2}{l}{Variables of data owners' strategy (MPs)} \\
\midrule
$S_k$ & Collection of a data-sharing policy chosen by data owner $k$ \\
$a_{k}$ & A binary variable indicating whether data is shared \\
$\bm{a}$ & Vector of binary variables of all data owners \\
$G_{\epsilon_{k}}$ & Function that represents a perturbation-based privacy-preserving mechanism \\
$\epsilon_{k}$ & The parameter for privacy-preserving mechanism of data owner $k$ \\
$\bm{\epsilon}$ & Vector of parameters for privacy-preserving mechanism of all data owners \\

\midrule
\multicolumn{2}{l}{Variables of players' utilities} \\
\midrule
$U_k(\cdot)$ & Utility function of data owner $k$ \\
$W_k(\cdot)$ & Service quality improvement of data owner $k$ \\
$\beta_k$ & The weight for privacy cost $k$ \\
$U_1(\cdot)$ & Utility function of data requester $1$ \\
$W_1(\cdot)$ & Transport-related welfare of data requester $k$ \\

\midrule
\multicolumn{2}{l}{Variables of equilibrium analysis} \\
\midrule
$(\bm{d}^*, \bm{a}^*, \bm{\epsilon}^*)$ & A strategy profile that constitutes an equilibrium of the game \\
$\bar{U}_k(z)$ & Simplified representation of $U_k(\bm{\epsilon}, \bm{a})$ \\
$\phi_k$ & Simplified representation of $\Phi^{-1}(d_k)$ \\
$z_k$ & Simplified representation of $a_k\epsilon_k$ \\

\midrule
\multicolumn{2}{l}{Variables of collaborative fixed-time traffic signal optimization} \\
\midrule
$\left(C^{\text{old}}, \bm{g}^{\text{old}}, \bm{o}^{\text{old}}\right)$ & Outdated signal timing plan \\
$\left(C^*, \bm{g}^*, \bm{o}^*\right)$ & New optimized signal timing plan \\
$C$ & Common cycle length of all intersections \\
$g_{ip}$ & Green split of phase $p$ in intersection $i$ \\
$o_i$ & Offset of intersection $i$ to $i+1$ \\
$\zeta_i (\bar{\zeta}_i)$ & Relative offset of intersection $i$ \\
$e_i\left(\bar{e}_i\right)$ & queue clearance time (cycles) \\
$q_{m}^c$ & Capacity flow of movement $m$ \\
$\hat{q}_{m}$ & Estimated traffic flow of movement $m$  \\
$q_{m}$ & True traffic flow of movement $m$ \\
$\hat{\bm{q}}$ & Vector of estimated traffic flow of all movements \\
$\bm{q}$ & Vector of true traffic flow of all movements \\

\midrule
\multicolumn{2}{l}{Variables of the specified observation function and privacy-preserving mechanism} \\
\midrule
$\bm{x}_k$ & System states of data owner $k$ \\
$\bm{y}_k$ & Required observation of stakeholder $k$ \\
$\hat{\bm{y}}_k$ & Perturbed required observation of stakeholder $k$ \\
$\tau_{kn}^m$ & Queuing point collected from trajectory $n$ of data owner $k$ on movement $m$ \\
$h_{kn}^{m}, t_{kn}^{m}$ & Position and time of the queuing points of stakeholder $k$'s $n$th vehicle on movement $m$ on a time-space diagram \\
$\hat{h}_{kn}^{m}, \hat{t}_{kn}^{m}$ & Position and time of the perturbed queuing point of stakeholder $k$'s $n$th trajectory on movement $m$ on a time-space diagram \\
$D, D'$ & Adjacent datasets \\

\bottomrule
\end{longtable}
\bibliographystyle{apalike} 
\bibliography{reference}
\end{document}

%% file: preamble.tex
\usepackage{mathptmx}
\usepackage{geometry}
\usepackage{xcolor}
\usepackage{array}
\usepackage{bbding}
\usepackage{multirow}
\usepackage[fleqn]{amsmath}
\usepackage{amssymb}
\usepackage{amsthm}
\usepackage{graphicx}
\usepackage{natbib}
\usepackage{lscape}
\usepackage[hyphens]{url}
\usepackage[hidelinks]{hyperref}
\usepackage{comment}
\usepackage{bm}
\usepackage[title]{appendix}
\usepackage{longtable}
\usepackage{mathtools}
\usepackage{chngcntr}
\usepackage{authblk}
\usepackage{babel}
\usepackage{dsfont}
\usepackage{subcaption}
\usepackage{fancyhdr}
\usepackage{cleveref}
\usepackage{abstract}

\makeatletter
\allowdisplaybreaks
\date{}

\fancypagestyle{plain}{%
	\fancyhf{} 
	
	\lhead{\footnotesize \textcolor{gray}{arXiv preprint}}
	\chead{\footnotesize \textcolor{gray}{\textit{\shortauthors -- \runningtitle}}}
	\rhead{\footnotesize \textcolor{gray}{\thepage}}
}
\makeatletter
\def\blfootnote{\xdef\@thefnmark{}\@footnotetext}
\makeatother

\makeatletter
\def\titlepageext{
	\begin{center}	
	{\parindent0pt
		\rule{0.9\textwidth}{1pt}
		\begin{minipage}[t]{0.25\textwidth}
			\small {\it Keywords:}\\
			\keyword
		\end{minipage}%
		\hspace{3mm}
		\begin{minipage}[t]{0.6\textwidth}
			\small \abstract
		\end{minipage}%
		
		\rule{0.9\textwidth}{2pt}
	}
	\end{center}

	\blfootnote{* Corresponding author. E-mail address: \href{mailto:\corresemail}{\corresemail}.}
}
\makeatother

\usepackage[mathlines, pagewise]{lineno}
\usepackage{etoolbox} 

\newcommand*\linenomathpatchAMS[1]{%
	\expandafter\pretocmd\csname #1\endcsname {\linenomathAMS}{}{}%
	\expandafter\pretocmd\csname #1*\endcsname{\linenomathAMS}{}{}%
	\expandafter\apptocmd\csname end#1\endcsname {\endlinenomath}{}{}%
	\expandafter\apptocmd\csname end#1*\endcsname{\endlinenomath}{}{}%
}

\expandafter\ifx\linenomath\linenomathWithnumbers
\let\linenomathAMS\linenomathWithnumbers
\patchcmd\linenomathAMS{\advance\postdisplaypenalty\linenopenalty}{}{}{}
\else
\let\linenomathAMS\linenomathNonumbers
\fi

\linenomathpatchAMS{gather}
\linenomathpatchAMS{multline}
\linenomathpatchAMS{align}
\linenomathpatchAMS{alignat}
\linenomathpatchAMS{flalign}

\nolinenumbers

\newtheorem{theorem}{Theorem}
\newtheorem{prop}[theorem]{Proposition}
